\newtheorem{Lem}{Lemma}
\newtheorem{The}{Theorem}
\newtheorem{Pro}{Proposition}
\theoremstyle{definition}
\newtheorem{Rem}{Remark}
\definecolor{ColorEdward}{rgb}{0.4,0.6,0.7}
\definecolor{ColorFrancisco}{rgb}{0.956,0.878,0.5}
\definecolor{ColorJosep}{rgb}{0.2,0.5,0.1}
\newcommand{\sgn}{\operatorname{sgn}}
\DeclareMathOperator{\R}{\mathbb{R}}
\begin{document}

\title{Quasispecies dynamics with time lags and periodic fluctuations in replication}

\author{Edward A. Turner$^{1,2,*}$, Francisco crespo$^{2}$, Josep Sardanyés$^{3}$\\ \MakeLowercase{and} Nolbert Morales$^4$}

\address{$^{1}${Universidad de Concepción, Departamento de Ciencias Básicas, Campus Los Ángeles,\\ {\small \itshape{Av. J.A. Coloma 0201, Los Ángeles, Chile}}}}
	
\address{$^{2}${Universidad del Bío-Bío, Grupo de investigaci\'on en Sistemas Din\'amicos\\ y Aplicaciones (GISDA), {\small \itshape{Casilla 5-C, Concepci\'on, Chile}}}}
\address{$^{3}$Centre de Recerca Matem\`atica (CRM), 
{\small \itshape{08193 Cerdanyola del Vall\`es,\\ Barcelona, Catalonia, Spain }}}
\address{$^{4}$Universidad San Sebastián, Facultad de Ingenier\'ia, Arquitectura y Diseño,\\ {\small \itshape{Lago Panguipulli 1390, Puerto Montt, Chile }}}

\thanks{$^{*}$Corresponding author: \texttt{edward.turner1901@alumnos.ubiobio.cl}}

\subjclass{92D25, 34K13, 47H11.}
	
\keywords{Quasispecies, delay differential equation, periodic solutions, degree theory.}

\begin{abstract}
		Quasispecies theory provides the conceptual and theoretical bases for describing the dynamics of biological information of replicators subject to large mutation rates. This theory, initially conceived within the framework of prebiotic evolution, is also being used to investigate the evolutionary dynamics of RNA viruses and heterogeneous cancer cells populations. In this sense, efforts to approximate the initial quasispecies theory to more realistic scenarios have been made in recent decades. Despite this, how time lags in RNA synthesis and periodic fluctuations impact quasispecies dynamics remains poorly studied. In this article, we combine the theory of delayed ordinary differential equations and topological Leray-Schauder degree to investigate the classical quasispecies model in the single-peak fitness landscape considering time lags and periodic fluctuations in replication. First, we prove that the dynamics with time lags under the constant population constraint remains in the simplex in both forward and backward times. With backward mutation and periodic fluctuations, we prove the existence of periodic orbits regardless of time lags. Nevertheless, without backward mutation, neither periodic fluctuation nor the introduction of time lags leads to periodic orbits. However, in the case of periodic fluctuations, solutions converge exponentially to a periodic oscillation around the equilibria associated with a constant replication rate. We check the validity of the error catastrophe hypothesis assuming no backward mutation; we determine that the error threshold remains sound for the case of time of periodic fitness and time lags with constant fitness. Finally, our results show that the error threshold is not found with backward mutations.
\end{abstract}
\maketitle	


\section{Introduction}
Quasispecies theory was conceived in the 1970's by Manfred Eigen~\cite{Eigen1971} and Peter Schuster~\cite{Eigen1988}. This theory was developed to investigate the dynamics of biological information for replicators subject to large mutation rates and was initially applied within the framework of prebiotic evolution. Later on, this theory was adapted to systems of replicons evolving under large mutation rates such as RNA viruses~\cite{Mas2004,Perales2020,Revull2021} and cells with remarkable genetic instability in cancer~\cite{Sole2003,Sole2004,Brumera2006}.  More recently, a conceptual parallelism between viral quasispecies and the conformational heterogeneity of prions have been established~\cite{Li2010,Weissmann2011,Weissmann2012}. The investigations into experimental quasispecies involving bacterial, animal, and plant RNA viruses, along with its implications for RNA genetics, were early summarized in Refs.~\cite{Domingo1985,Domingo1988}. 

The concept quasispecies refers to highly heterogeneous populations of replicators composed by the so-called master or wild-type (wt) sequence which is surrounded by a cloud of mutants (mutant spectrum) which stabilize at the mutation-selection balance~\cite{Eigen1971,Eigen1988,Nowak1991,Bull2005}. Hence, selection acts on the quasispecies as a whole more than on a particular sequence or set of sequences. 
The integration of quasispecies theory into virology has fundamentally transformed our comprehension of the composition and dynamics of viral populations during disease onset. The presence of a mutant spectrum was initially demonstrated through clonal analyses of RNA bacteriophage Q$\beta$ populations in an infection initated with a single viral particle~\cite{Domingo1978}. Since this finding, viral quasispecies have been identified and quantified in multitude of viruses such as foot-and-mouth disease virus~\cite{Domingo1980,Sobrino1983}, vesicular stomatitis virus~\cite{Holland1979,Holland1982}, hepatitis viruses~\cite{Martell1992,Davis1999,Mas2004,Perales2020}, or SARS-CoV-2~\cite{Domingo2023}, to cite some examples. 

One of the most ground-breaking predictions of quasispecies theory is the so-called error threshold or error catastrophe~\cite{Eigen1971,Eigen1988,Biebricher2005}. This notion has also led to other important concepts such as lethal mutagenesis and lethal defection. In the classic quasispecies model, the error threshold is the mutation rate beyond which the master sequences i.e., the sequence with the highest replicative capacity, fades out and the population is exclusively composed of mutant sequences~\cite{Eigen1971,Biebricher2005}. As a difference from lethal mutagenesis (see below), which involves an effective extinction of sequences, the error threshld involves a shift in the sequence space due to the surpass of the critical mutation involving a full population composed by mutants. It is well known that, under the single-peak fitness landscape assumption, the error threshold is governed by a transcritical bifurcation~\cite{Sole2004, Sole2006,Castillo2017}. Increased mutation rates typically result in decreased population fitness as most mutations with phenotypic effects are detrimental. This principle underlies the concept of lethal mutagenesis, where a viral population can be eradicated by intentionally inducing mutations through mutagenic agents~\cite{Bull2008}. Evidence of lethal mutagenesis have been provided i.a., for human immunodeficiency virus type 1~\cite{Loeb1999,Dapp2013}, poliovirus~\cite{Crotty2001}, food-and-mouth-disease virus~\cite{Perales2011,Avila2017}, and hepatitis C virus~\cite{Prieto2013,Avila2016} in cell cultures (see also~\cite{Perales2019}). Lethal defection, which involves the extinction of viral populations due to appearance of defective viral genomes at low amounts of mutagen, has been identified for lymphocytic choriomeningitis virus un cell cultures~\cite{Grande2005}. 

Initial models of viral quasispecies inherited assumptions of quasispecies theory developed for prebiotic replicators. These included, for instance, deterministic dynamics (infinite populations) and geometric replication~\cite{Eigen1971,Eigen1988}, and simple fitness landscapes such as the Swetina-Schuster landscape~\cite{Swetina1982,Sole2003,Sole2004,Sole2006}. This simple fitness landscape considers two different populations given by the master sequence and the pool of mutants which are averaged over a single variable. Despite the simplicity of this approach, a simple model with such assumptions qualitatively explained complexity features tied to the error threshold in hepatitis C-infected patients~\cite{Sole2006,Mas2004}. 

However, RNA viruses unfold an enormous complexity at the genetic and population dynamics levels~\cite{Sole2021,Sanjuan2021,Aylward2022}. During the last decades, considerable efforts have been made to approach the initial quasispecies theory to more realistic scenarios for RNA viruses and for quasispecies theory in general. These new models have considered, either separately or in combination: finite populations~\cite{Nowak1989,Sole2004,Sardanyes2008}; stochastic effects~\cite{Sole2004,Sardanyes2011,Ari2016}; spatially-embedded quasispecies~\cite{Altemeyer2001,Sardanyes2008,Sardanyes2011}; more complex fitness landscapes including dynamic ones~\cite{Wilke2001a,Wilke2001c}, epistasis~\cite{Sardanyes2009,Elena2010}, and mutational fitness effects~\cite{Josep2014}; viral complementation~\cite{Sardanyes2010}; the survival-of-the-flattest effect (with empirical evidence in the evolution of computer programs~\cite{Wilke2001} and viroids~\cite{Codoner2006}), see also~\cite{Wilke2001b,Sardanyes2008}; and asymmetric modes of replication~\cite{Sardanyes2009,Sardanyes2011}, among others.

Despite these efforts, there are still some features of viral (and RNA) dynamics that have not been considered within quasispecies theory. For instance, the impact of time lags in the replication of RNAs. Quasispecies theory, and most of the models for RNA virus replication, consider instantaneous synthesis of full genomes (either wildtype or distinct mutants), thus ignoring the time needed for the synthesis of the genomes. However, viral genomes are synthesized by the sequential incorporation of nucleotides by the RNA-dependent RNA-polymerase in the replication complex. Several studies have quantified the rates of elongation of viral genomes. For instance, recent studies have revealed that the SARS-CoV-2 replication complex has an elongation rate of $150$ to $200$ nucleotides per second~\cite{Campagnola2022}. This elongation rate is more than twice as fast the poliovirus polymerase complex. According to these rates, a full SARS-CoV-2 RNA genome ($\approx 29.9$ Kb) is synthesized in about $2.5$ and $3.32$ minutes. A full poliovirus genome ($\approx 7.4$ Kb) will be synthesized in about $1.5$  minutes.

Another, often ignored, important effect are the fluctuations that key parameters such as replication rates may suffer. For RNA viruses, specially those infecting plants, replication processes may follow periodic fluctuations at the within-tissue or within-host levels at different time scales mainly due to to changes in temperature~\cite{Obreepalska2015,Honjo2020}. Moreover, it is known that the mammalian brain has an endogenous central circadian clock that regulates central and peripheral cellular activities. At the molecular level, this day-night cycle induces the expression of upstream and downstream transcription factors that influence the immune system and the severity of viral infections over time. In addition, there are also circadian effects on host tolerance pathways. This stimulates adaptation to normal changes in environmental conditions and requirements (including light and food). These rhythms influence the pharmacokinetics and efficacy of therapeutic drugs and vaccines. The importance of circadian systems in regulating viral infections and the host response to viruses is currently of great importance for clinical management~\cite{Zandi2023}. 

In this article we aim to covering this gap in quasispecies models by studying the impact of time lags and fluctuations in RNA replication. To do so we use the Swetina-Schuster fitness landdscape~\cite{Swetina1982,Sole2006} as a first approach to this problem. The article is organised as follows. In Section~\ref{sec:general_model} we introduce the classical quasispecies model and prove that the constant population assumption is valid to study time lags in dynamics. Section~\ref{sec:landscape}  introduces the model under the single-peak fitness landscape assumption. Here, we first investigate the dynamics without backward mutation considering periodic replicative fitness and time lags in replication separately. Finally, we also consider the case where phenotypic reversions occur considering periodic repliation rates. 

\section{General quasispecies model with time lags and periodic fitness} \label{sec:general_model}
The classical quasispecies model was initially formulated within the framework of well-stirred populations of replicators assuming a constant population (CP)~\cite{Eigen1971,Eigen1988}. This system can be modelled by the following system of autonomous ordinary differential equations: 
\begin{equation}\label{eq:Quasi}
	\dot x_{i}=\sum_{j=0}^n f_jQ_{ji}x_j - \tilde\Phi(\mathbf{x})x_{i}.
\end{equation}
The state variables $x_i$ denote the concentration (population numbers) of the $i$-th replicator species. Parameter $f_j$ is the replication rate of the $j$-th population of replicators, $Q_{ji}$ is the matrix denoting the transitions from the $j$-th to the $i$-the population of replicators due to mutation, and $\tilde\Phi(\mathbf{x})=\sum_{j=0}^nf_jx_j$ is the out-flux term ensuring a CP i.e., $\sum_i \dot{x}_i = 0$ and $\sum_i x_i = constant$. Note that the CP setting $\sum_i x_i = 1$ involves that the out-flux term is given by the average fitness associated to the population vector $\mathbf{x}=(x_0,\dots,x_n).$  The CP condition for Eqs.~\eqref{eq:Quasi} makes the orbits to span the simplex 
\begin{equation}\label{eq:CP}
	\Sigma_{n+1}=\left\lbrace \mathbf x\in\mathbb R^{n+1}:x_0+x_1+\dots+x_n=1,x_0,\dots,x_n>0 \right\rbrace.
\end{equation}
This feature is proved in the following proposition.

\begin{Pro}\label{Prop:1} The simplex \eqref{eq:CP} is invariant under the flow of system \eqref{eq:Quasi}.  
\end{Pro}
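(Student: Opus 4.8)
The plan is to establish invariance of the simplex $\Sigma_{n+1}$ in two steps: first the invariance of the affine hyperplane $H = \{\mathbf{x} : \sum_i x_i = 1\}$, then the positivity of all coordinates along the flow. For the first step, I would set $S(\mathbf{x}) = \sum_{i=0}^n x_i$ and compute $\dot S$ along trajectories of \eqref{eq:Quasi}. Using $\sum_i Q_{ji} = 1$ (the mutation matrix is stochastic, so every replication event of type $j$ produces exactly one genome, possibly mutated), one gets $\sum_i \sum_j f_j Q_{ji} x_j = \sum_j f_j x_j = \tilde\Phi(\mathbf{x})$, hence $\dot S = \tilde\Phi(\mathbf{x}) - \tilde\Phi(\mathbf{x}) S = \tilde\Phi(\mathbf{x})(1 - S)$. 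Therefore $S \equiv 1$ is an invariant set: if $S(0) = 1$ then $S(t) = 1$ for all $t$ by uniqueness of solutions to the scalar ODE $\dot S = \tilde\Phi(1-S)$ (equivalently, $\frac{d}{dt}(S-1) = -\tilde\Phi(\mathbf{x})(S-1)$ gives $S(t) - 1 = (S(0)-1)\exp(-\int_0^t \tilde\Phi\,ds)$).

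For the second step — positivity — I would argue that the boundary faces $\{x_i = 0\}$ cannot be crossed. Fix $i$ and suppose $x_i = 0$ at some point of the (closed) simplex with all other coordinates nonnegative; then from \eqref{eq:Quasi},
\begin{equation*}
\dot x_i \big|_{x_i = 0} = \sum_{j=0}^n f_j Q_{ji} x_j - \tilde\Phi(\mathbf{x}) \cdot 0 = \sum_{j \neq i} f_j Q_{ji} x_j \geq 0,
\end{equation*}
since $f_j \geq 0$, $Q_{ji} \geq 0$, and $x_j \geq 0$. Thus the vector field on the face $\{x_i = 0\}$ points inward (or is tangent), so by the standard Nagumo-type invariance criterion for closed sets no trajectory starting in the interior can reach that face in finite time. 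Combining with step one, $\Sigma_{n+1}$ (the relatively open simplex) is forward invariant; and since the argument for $\dot x_i$ uses only signs, the same reasoning applied to the time-reversed system shows backward invariance as well, so $\Sigma_{n+1}$ is invariant under the flow.

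The main subtlety — really the only place one must be slightly careful — is the passage from the differential inequality $\dot x_i|_{x_i=0} \geq 0$ to genuine non-crossing of the face. The clean way is to write, for fixed $i$, $\dot x_i = \big(\sum_{j\neq i} f_j Q_{ji} x_j / x_i - \tilde\Phi(\mathbf{x})\big) x_i =: g_i(\mathbf{x}) x_i$ wherever $x_i > 0$, but since the first term blows up this is awkward; instead I would simply bound $\dot x_i \geq -\tilde\Phi(\mathbf{x}) x_i \geq -M x_i$ on a neighborhood of the simplex where $\tilde\Phi$ is bounded by some $M > 0$, whence $x_i(t) \geq x_i(0) e^{-Mt} > 0$ for all $t \geq 0$ by Grönwall. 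This gives forward positivity directly; for backward positivity one uses instead the upper bound $\dot x_i \leq (\sum_{j\neq i} f_j Q_{ji} x_j)/\,$—rather, one notes $\dot x_i \le C - \tilde\Phi x_i$ for a constant $C$ bounding the inflow on the compact simplex, which keeps $x_i$ bounded, and combined with $S \equiv 1$ forces all coordinates to stay in $(0,1)$ backward in time as well. I expect the write-up to be short; the hyperplane computation is the conceptual heart and the positivity is routine once the Grönwall bound is in place.
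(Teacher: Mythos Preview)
Your first step---computing $\dot S=\tilde\Phi(\mathbf x)(1-S)$ for $S=\sum_i x_i$ and concluding that the hyperplane $\{S=1\}$ is invariant---is exactly the paper's proof; in fact that computation is \emph{all} the paper does, and it never addresses the positivity constraints $x_i>0$ that are part of the definition of $\Sigma_{n+1}$. Your forward-positivity argument via the Gr\"onwall bound $\dot x_i\ge -\tilde\Phi(\mathbf x)\,x_i\ge -M x_i$ (valid while the trajectory remains in the closed simplex, hence up to the first hypothetical exit time, which then cannot exist) is correct and actually fills a gap left by the paper.

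Where your proposal goes wrong is the backward-invariance claim. For the time-reversed flow one has $\dot x_i|_{x_i=0}=-\sum_{j\ne i}f_jQ_{ji}x_j\le 0$, so the Nagumo/sign argument points the wrong way, and the vague upper bound you sketch at the end does not prevent a coordinate from reaching $0$ (or $1$) in finite backward time. A concrete counterexample: take $n=1$, $f_0=f_1=1$, $Q_{00}=1-\mu$, $Q_{01}=\mu$, $Q_{10}=0$, $Q_{11}=1$. On $\Sigma$ one has $\tilde\Phi\equiv 1$ and the master equation reduces to $\dot x_0=-\mu x_0$, so $x_0(t)=x_0(0)e^{-\mu t}$; starting from any $x_0(0)\in(0,1)$ the trajectory hits the face $x_1=0$ at the finite backward time $t=\mu^{-1}\log x_0(0)<0$. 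Hence the open simplex is forward invariant but not, in general, backward invariant. Drop that claim (the proposition is really about forward invariance), and your write-up is both correct and strictly more complete than the paper's.
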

\begin{proof} Let observe that $\frac{d}{dt}\Sigma_{n+1}=D\Sigma_{n+1}(\mathbf x)\dot{\mathbf x}=\sum_{i=0}^{n}\dot x_{i}.$ Then, by adding the equations of \eqref{eq:Quasi} and considering that $\sum_{j=0}^n Q_{ji}=1$ for $i=0,\dots,n,$ we obtain that
	\begin{align*}
		\sum_{i=0}^n\dot x_i&=\tilde\Phi(\mathbf x)\left(1-\sum_{i=0}^nx_i\right)=0.
	\end{align*}
	Thus, $\Sigma_{n+1}$ is invariant.
\end{proof}

As we highlighted in the Introduction, the quasispecies model assumes that the replication of the populations is instantaneous. However, many biological processes suffer time lags. For instance the replication of RNAs or the proliferation of cancer cells do not occur instantaneously. If we focus on RNA viruses, viral RNA genomes are synthesized by the RNA-dependent RNA polymerase, which take some time in synthesizing the full RNA sequences. This polymerization and other phenomena such as RNA folding and maturation introduce time lags in the production of a mature, replicating RNA sequence. Moreover, viral replication can also change due to circadian cycles or day-night temperature fluctuations. Incorporating these effects into the quasispecies model is fundamental for a more comprehensive understanding and accurate representation of complex biological phenomena. The quasispecies model to explore these features in a qualitative manner is given by the next non-autonomous delay differential equation
\begin{equation}\label{eq:QuasiDelay}
	\dot x_{i}(t)=\sum_{j=0}^n f_j(t)Q_{ji}x_j(t-r_j)-x_{i}(t)\Phi(\mathbf x), 
\end{equation}
where now $f_j(t)$ is a periodic function and $r_j$ introduce the time lags (see below), and the outflow term reads $\Phi(\mathbf x)=\sum_{j=0}^n f_j(t)\,x_j(t-r_j)$.


As we mentioned above, the CP condition largely determines the dynamics of the quasispecies model which is tied to the simplex space~\eqref{eq:CP}. Next, we will show that this condition still remains valid after with time lags.


It is well-known that time lags can introduce major changes into a system of differential equations. Usually, key features of the system are modified and  many techniques employed to analyze ordinary differential systems are no longer in use. However, the CP constraint is compatible with the delayed model \eqref{eq:QuasiDelay}. Hence, Proposition \ref{Prop:1} can be extended to system \eqref{eq:QuasiDelay}. For this purpose, we proceed in two stages; first, Proposition~\ref{vari_inva} shows that initial conditions with right endpoints in $\Sigma_{n+1}$ will remain for all positive time in $\Sigma_{n+1}$. In other words, it is not necessary for the entire initial condition to reside completely within $\Sigma_{n+1}$. Later, Proposition~\ref{Perm_esp_inv} indicates that any completely positive $T$-periodic solution of the system must necessarily reside in $\Sigma_{n+1}$.

\begin{Pro}\label{vari_inva} 
	Let us consider $\phi_i\in C(\mathbb R)$ and $\gamma=\max_{j=0,n}\{r_{j}\}$. If $\phi_i(t_0)\in \Sigma_{n+1}$ with $t_0\in\R$, then $\Sigma_{n+1}$ is an  invariant space  of  delay differential equations system  
	\begin{equation}
	\begin{split}\label{Sist_Gen}
		&\dot x_{i}(t)=\sum_{j=0}^n f_j(t)Q_{ji}x_j(t-r_j)-x_{i}(t)\Phi(\mathbf x),\\
		&x_i(t)=\phi_i(t),\quad t\in\left[t_0-\gamma,t_0\right],  \quad i=0,\dots,n.
	\end{split}
	\end{equation}
\end{Pro}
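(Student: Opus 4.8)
The plan is to adapt the conservation-of-mass computation from Proposition~\ref{Prop:1}, but — because delays block the usual ``the vector field points inward on $\partial\Sigma_{n+1}$'' argument — to combine it with an integrating-factor representation and the method of steps. First I would record that a solution of \eqref{Sist_Gen} exists on a maximal interval $[t_0,t^{+})$: the right-hand side is a (quadratic) polynomial in the state variables and continuous in $t$, so local existence follows by the method of steps, and the solution will then be shown to remain in $\Sigma_{n+1}$.

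\emph{Step 1: invariance of the affine constraint.} Put $S(t)=\sum_{i=0}^n x_i(t)$. Summing the equations of \eqref{Sist_Gen} over $i$ and using the normalisation of $Q$ already invoked in the proof of Proposition~\ref{Prop:1}, the production terms collapse and one gets
\begin{equation*}
	\dot S(t)=\Phi(\mathbf x)\,\bigl(1-S(t)\bigr),\qquad \Phi(\mathbf x)=\sum_{j=0}^n f_j(t)\,x_j(t-r_j).
\end{equation*}
The point is that $\Phi(\mathbf x)$ involves only $t$ and \emph{delayed} values of the solution, never $S(t)$ itself; hence along any solution $t\mapsto\Phi(\mathbf x(t))$ is a fixed continuous function and the identity above is a scalar linear ODE for $S$. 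Since $S(t_0)=\sum_i\phi_i(t_0)=1$ by hypothesis, uniqueness forces $S(t)\equiv1$ on $[t_0,t^{+})$. This uses nothing about $\phi$ beyond its value at $t_0$, which is exactly the refinement advertised after the statement.

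\emph{Step 2: positivity by steps.} Assume first that $r_j>0$ for every $j$ and fix $0<\rho\le\min_j r_j$. On $[t_0,t_0+\rho]$ every delayed argument $t-r_j$ lies in $[t_0-\gamma,t_0]$, so $x_j(t-r_j)=\phi_j(t-r_j)$ is known; writing the $i$-th equation of \eqref{Sist_Gen} with the integrating factor $\exp(\int_{t_0}^{t}\Phi(\mathbf x(s))\,ds)$ yields
\begin{equation*}
	x_i(t)=e^{-\int_{t_0}^{t}\Phi(\mathbf x(s))\,ds}\left[\phi_i(t_0)+\int_{t_0}^{t}e^{\int_{t_0}^{s}\Phi(\mathbf x(u))\,du}\Bigl(\sum_{j=0}^n f_j(s)Q_{ji}\,\phi_j(s-r_j)\Bigr)ds\right].
\end{equation*}
Because $f_j>0$, $Q_{ji}\ge0$ and the history values are nonnegative, the bracket is bounded below by $\phi_i(t_0)>0$, so $x_i(t)>0$ on $[t_0,t_0+\rho]$; together with Step~1 this gives $\mathbf x(t)\in\Sigma_{n+1}$ there. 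Replacing the history by the (now positive) solution just constructed, the same representation propagates $\mathbf x(t)\in\Sigma_{n+1}$ from $[t_0+k\rho,t_0+(k+1)\rho]$ to the next interval, so by induction $\mathbf x(t)\in\Sigma_{n+1}$ for all $t\in[t_0,t^{+})$. Finally, since the orbit is confined to the bounded set $\overline{\Sigma_{n+1}}$ it cannot blow up, whence $t^{+}=+\infty$ and $\Sigma_{n+1}$ is invariant for all positive time.

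I expect Step~2 to be the delicate one: the delay means the field at time $t$ is evaluated at past states, so one cannot simply check the sign of $\dot x_i$ on the face $\{x_i=0\}$ as in the ODE case; the method-of-steps recursion is the substitute, and it works only because the sign structure of the nonlinearity ($f_j>0$, $Q_{ji}\ge0$, out-flux term linear in $x_i$) keeps the variation-of-constants bracket nonnegative at each stage, with the first stage fed by the nonnegative history $\phi$. If some $r_j=0$, the corresponding term is instantaneous, the representation in Step~2 becomes an implicit equation for $x_i$ on $[t_0,t_0+\rho]$, and one closes the argument with a short Gr\"onwall/fixed-point estimate; the conclusion is unchanged.
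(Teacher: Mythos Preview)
Your Step~1 is exactly the paper's proof: the paper also sets $z(t)=\sum_i x_i(t)$, obtains the scalar linear ODE $\dot z=(1-z)\sum_j f_j(t)\phi_j(t-r_j)$ on $[t_0,t_0+\gamma]$ by the method of steps, invokes uniqueness with $z(t_0)=1$ to get $z\equiv1$ on that interval, and then iterates. The paper's proof stops there --- it establishes only the affine constraint $\sum_i x_i(t)=1$ and says nothing about the positivity part of the definition of $\Sigma_{n+1}$.

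Your Step~2 therefore goes beyond what the paper does, and it is the right instinct: membership in $\Sigma_{n+1}$ requires $x_i(t)>0$ as well. However, there is a gap in your argument. You claim the variation-of-constants bracket is bounded below by $\phi_i(t_0)$ because ``the history values are nonnegative'', but the hypothesis of the proposition does \emph{not} assume this: it only requires $(\phi_0(t_0),\dots,\phi_n(t_0))\in\Sigma_{n+1}$, i.e.\ positivity and the sum condition at the single right endpoint $t_0$, while the $\phi_i$ are merely continuous on $[t_0-\gamma,t_0]$. If some $\phi_j(s-r_j)<0$ for $s\in[t_0,t_0+\rho]$, the integrand $\sum_j f_j(s)Q_{ji}\phi_j(s-r_j)$ can be negative and the bracket can drop below $\phi_i(t_0)$, so positivity of $x_i$ on the first step is not guaranteed by your representation. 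In short: under the stated hypothesis positivity genuinely cannot be extracted from the sign structure alone, and the paper quietly sidesteps this by proving only $\sum_i x_i\equiv1$. If you want Step~2 to go through you must add the assumption that the initial history is nonnegative (which is biologically natural but not written into the statement).
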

\begin{proof} Let $x_i(t)\in C\left([t_0,\infty[,\R^{n+1}\right)$ is a solution of the system \eqref{Sist_Gen}. By the Proposition \ref{Prop:1}, 
	$$\frac{d}{dt}\left(\displaystyle\sum_{i=0}^nx_i(t) \right)=\left(1-\displaystyle\sum_{i=0}^nx_i(t) \right)\displaystyle\sum_{j=0}^nf_j(t)x_j(t-r_{j}).$$
	If we consider $z(t)=\sum_{i=0}^nx_i(t)$ and considering the method of steps in system \eqref{Sist_Gen}, this system reduces for $t\in\left[0, \gamma\right]$ in the ordinary differential equation 
	\begin{equation}\label{EDO}
		\dot z(t) =(1-z(t))\displaystyle\sum_{j=0}^nf_j(t) \phi_j(t-r_{j}), 
	\end{equation}
	such that $z(t_0)=1$, then by the existence and uniqueness theorem we have that $z(t)=1$ for  $t\in [t_0,t_0+\gamma]$,  applying again the method of the steps for $t\in[t_0+\gamma,t_0+2\gamma]$ we obtain the equation \eqref{EDO} with the initial condition $z(t_0+\gamma)=1$  from where again we have that  $z(t)=1$ for  $t\in [ t_0+\gamma,t_0+2\gamma]$. If we continue in this way by induction we obtain that $z(t)=\sum_{i=0}^nx_i(t)=1$  for all $t\geq t_0$.
\end{proof}

\begin{Pro}\label{Perm_esp_inv} 
	If $\phi_i\in C([0,T],\R)$ are positive $T$-periodical solutions of the system \eqref{vari_inva} in $]0,1[^{n+1}$, then $(\phi_0(t),...,\phi_n(t))\in \Sigma_{n+1}$ for all $t\in\R$.  
\end{Pro}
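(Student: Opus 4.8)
The plan is to reduce the claim to a scalar linear ODE for the total mass $z(t):=\sum_{i=0}^n\phi_i(t)$ and then exploit $T$-periodicity. Since the $\phi_i$ are $T$-periodic solutions of \eqref{Sist_Gen}, each $\phi_i$ (extended periodically to $\mathbb R$) is $C^1$, so $z\in C^1(\mathbb R)$ and $z$ is $T$-periodic. Summing the equations of \eqref{Sist_Gen} exactly as in the proof of Proposition~\ref{Prop:1} (using $\sum_{j=0}^n Q_{ji}=1$) gives
\[
	\dot z(t)=\bigl(1-z(t)\bigr)\,g(t),\qquad g(t):=\sum_{j=0}^n f_j(t)\,\phi_j(t-r_j).
\]

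Next I would record the relevant properties of $g$: it is continuous and $T$-periodic, and, since the replication rates $f_j$ are positive and every $\phi_j$ takes values in $\,]0,1[\,$, it is strictly positive, so $\Lambda:=\int_0^T g(s)\,ds>0$. Writing $u:=z-1$, the displayed equation becomes $\dot u=-g(t)\,u$, with solutions $u(t)=u(0)\exp\!\bigl(-\int_0^t g(s)\,ds\bigr)$. Evaluating at $t=T$ and using $u(T)=u(0)$ (periodicity of $z$) yields $u(0)\bigl(1-e^{-\Lambda}\bigr)=0$; since $\Lambda>0$ forces $e^{-\Lambda}\ne 1$, we get $u(0)=0$, hence $u\equiv 0$. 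Therefore $z(t)=\sum_{i=0}^n\phi_i(t)=1$ for all $t$, and combined with the hypothesis $\phi_i(t)\in\,]0,1[\,$ this shows $(\phi_0(t),\dots,\phi_n(t))\in\Sigma_{n+1}$ for every $t\in\mathbb R$.

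The step I would treat most carefully is the claim $\Lambda\ne 0$: this is precisely what excludes nonconstant $T$-periodic solutions of $\dot u=-g(t)u$, and it relies on positivity of the $f_j$ together with positivity of the components $\phi_j$, so those standing assumptions should be stated explicitly. A minor point worth addressing is that $z$ really is differentiable and $T$-periodic on all of $\mathbb R$ — not merely a solution of an auxiliary ODE on successive intervals $[t_0+k\gamma,t_0+(k+1)\gamma]$ as in Proposition~\ref{vari_inva} — which holds because a $T$-periodic solution of the delay system, extended periodically, satisfies \eqref{Sist_Gen} for every real $t$. (Alternatively, one can avoid solving the ODE: $z$ attains a global maximum at some $t^\ast$ where $\dot z(t^\ast)=0$, forcing $z(t^\ast)=1$ because $g(t^\ast)>0$, and likewise at a global minimum, so $z\equiv 1$.)
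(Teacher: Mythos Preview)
Your argument is correct. The paper takes a different, slightly less direct route: it also forms $z(t)=\sum_i\phi_i(t)$ and derives $\dot z=(1-z)\sum_j f_j\phi_j(\cdot-r_j)$, but instead of solving this linear ODE it observes that by periodicity there is some $t_0$ with $\dot z(t_0)=0$, hence $z(t_0)=1$ (using positivity of the factor $\sum_j f_j\phi_j$), so the solution hits $\Sigma_{n+1}$ at $t_0$; it then invokes the forward-invariance result of Proposition~\ref{vari_inva} with initial segment $\phi|_{[t_0-T,t_0]}$, together with uniqueness of solutions, to conclude the solution stays in $\Sigma_{n+1}$ for all $t\ge t_0$ and hence (by periodicity) for all $t$. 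Your explicit-solution argument is more self-contained---it does not need to call back to Proposition~\ref{vari_inva} or to a uniqueness statement for the delay system---while the paper's approach has the virtue of reusing the invariance machinery already set up. Your parenthetical alternative (max and min of $z$ both equal $1$) is in fact the cleanest of the three and subsumes the paper's single-critical-point step without the subsequent appeal to forward invariance.
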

\begin{proof}
	Let us assume that $\phi_i \in C([0,T],\mathbb{R})$  with $i=0,...,n$ are positive $T$-periodic solutions of the system \eqref{vari_inva}  in $]0,1[^{n+1}$, not contained in $\Sigma_{n+1}$. Then, the function $z(t) = \sum_{i=0}^n \phi_i(t)$ is also a positive periodic function that satisfies the differential equation
	\begin{equation}\nonumber
		\dot z(t) =(1-z(t))\displaystyle\sum_{j=0}^nf_j(t)\phi_j(t-r_{j}(t)).
	\end{equation}
	Since $z$ is a $T$-periodic function, there exists $t_0\in[0,T]$ such that 
	\begin{equation}\nonumber
		0 =(1-z(t_0))\displaystyle\sum_{j=0}^nf_j(t_0)\phi_j(t-r_{j}(t_0)),
	\end{equation}
	and therefore $(\phi_0(t_0),...,\phi_n(t_0))\in \Sigma_{n+1}$. Then, if we consider the initial value problem
	\begin{equation}\nonumber
		\dot x_i(t)=\sum_{j=0}^nf_j(t)x_j(t-r_{j}(t))(Q_{ji}-x_i(t)), \quad x_i(t)=\phi_i(t),\quad t\in\left[t_0-T,t_0\right].
	\end{equation}
	From the Proposition \ref{vari_inva} we obtain that $(x_0(t),...,x_n(t)) \in \Sigma_{n+1}$ for all $ t\geq t_0.$  Then, due to the uniqueness of solutions in delayed differential equations, we can conclude that  $(x_0(t),...,x_n(t))=(\phi_0(t),...,\phi_n(t))\in \Sigma_{n+1}$ for all $t\in \R$.
\end{proof}

These statements provide our system with a biological interpretation, with and without time lags. 

\section{Quasispecies model for the single-peak fitness landscape}
\label{sec:landscape}
In this section we introduce the quasispecies model in a simple fitness landscape which assumes that mutations generate deleterious mutants. This allows to divide the population of sequences into two state variables: the master sequence $x_0$ with a high fitness, and the pool of mutants which are lumped together into a, lower fitness, average sequence $x_1$. This fitness landscape, also known as the Swetina-Schuster landscape~\cite{Swetina1982,Sole2003} is explored here for two reasons: (i) it provides a suitable mathematical framework allowing for analytical derivations, specially under the CP for which the dynamical system can be reduced by a degree of freedom; (ii) such a landscape recovered quasispecies complexity features in clinical data~\cite{Sole2006}. This system is obtained by considering $n=1$ in \eqref{eq:QuasiDelay} 
 \begin{equation}\label{eq}
	\begin{split}
		\dot x_{0}&=f_{0}(1-\mu)x_{0}(t-r_0)+f_1\xi x_1(t-r_1)-x_{0}(t)\left(f_0x_0(t-r_0)+f_1x_1(t-r_1)\right),\\
		\dot x_{1}&=f_{0}\mu x_{0}(t-r_0)+f_{1}(1-\xi)x_{1}(t-r_1)-x_{1}(t)\left(f_0x_0(t-r_0)+f_1x_1(t-r_1)\right),
	\end{split}
\end{equation}
where $Q_{01}=\mu,$ $Q_{00}=1-\mu,$ $Q_{11}=(1-\xi),$  and $Q_{10}=\xi.$ In the subsequent sections, we also employ the following compact notation for \eqref{eq}
$$\dot{\mathbf{x}}(t) = F(\mathbf x(t),\mathbf x(t-r_0),\mathbf x(t-r_1)),\quad \mathbf x=(x_0,x_1),$$ 
which is a nonlinear retarded functional differential system with two bounded lags, and the map $F:\mathbb R^6\to\mathbb R$. As mentioned, $f_i$ denotes the replication rates, and $\mu$ and $\xi$ are the mutation rates, both forward and backward. For $n=1$, the dynamics span the segment $\Sigma_{2}$, hereafter denoted as $\Sigma$ for simplicity. 

It is important to note that our analyses will distinguish the absence and presence of backward mutation, which correspond with $\xi = 0$ and $\xi \neq 0$, respectively. Some works have studied the single-peak fitness landscapes without backward mutations~\cite{Sole2003,Sole2004,Sole2006}. This approach assumes that mutations occur from the master to the mutant populations but not in the reverse sense. The enormous size of the sequence space makes this assumption a good first approximation. However, models with backward mutations have been also explored~\cite{Sardanyes2009,Sardanyes2011,Josep2014}. Despite backward point-mutations producing the original master sequences may be very improbable, beneficial mutations causing phenotypic reversions producing sequences with the same fitness than the master sequence may occur. Moreover, our investigation primarily explores how periodic fitness and time lags affect the qualitative behavior of the flow, specifically through the identification of oscillatory phenomena for both cases.

\subsection{Dynamics with no backward mutation}
The single-peak landscape model without backward mutation is obtained by considering $\xi=0$ in \eqref{eq}, obtaining
\begin{equation}\label{eq:16}
	\begin{split}
		\dot x_{0}&=f_{0}(t)(1-\mu)x_{0}(t-r_0)-x_{0}(t)\Phi(\mathbf x) \\
		\dot x_{1}&=f_{0}(t)\mu x_{0}(t-r_0)+f_{1}(t)x_{1}(t-r_1)-x_{1}(t)\Phi(\mathbf x) .
	\end{split}
\end{equation}
where $f_j$ represents positive $T$-periodic continuous functions, and $\mu \in \left[0,1\right]$ denotes the mutation rate of $x_{0}$, here with $\Phi(\mathbf x) =f_0(t)x_0(t-r_0)+f_1(t)x_1(t-r_1)$.
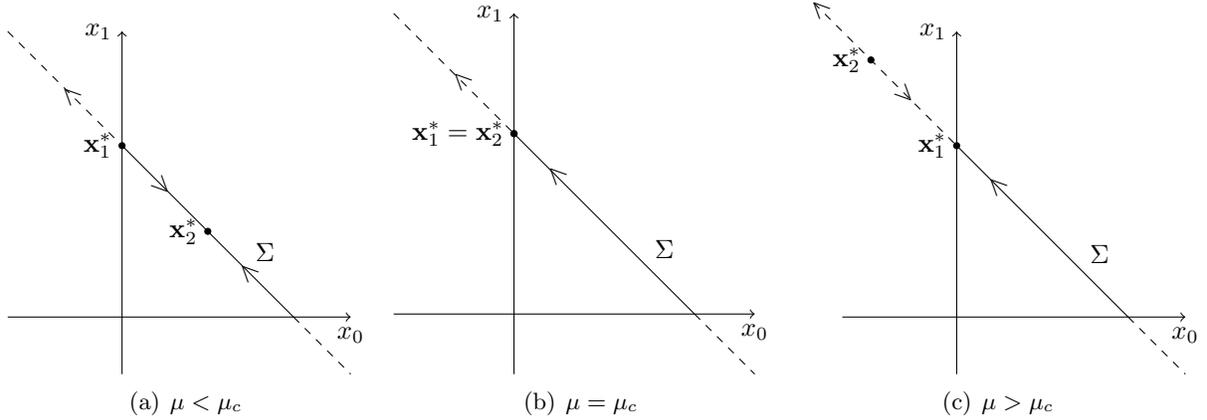
\begin{figure}
	\begin{center}
	\subfigure[$\mu<\mu_c$]{
		\begin{tikzpicture}[scale=0.76]
			\draw[->] (-2,0) -- (4,0);
			\draw[->] (0,-1) -- (0,5);
			\draw [] (0,3) -- (3,0);
			\draw[dashed] (-2,5) -- (0,3);
			\draw[dashed] (3,0) -- (4,-1);
			\filldraw[color=black] (0,3) circle (1.5pt);
			\filldraw[color=black] (1.5,1.5) circle (1.5pt);
			\coordinate [label=above:\textcolor{black}{\small$\Sigma$}] (E2) at (2.5,0.8);
			\coordinate [label=below:\textcolor{black}{\small$x_0$}] (E2) at (4,0);
			\coordinate [label=left:\textcolor{black}{\small$x_1$}] (E2) at (0,5);
			\coordinate [label=left:\textcolor{black}{\small$\mathbf x_1^*$}] (E2) at (0,3);
			\coordinate [label=left:\textcolor{black}{\small$\mathbf x_2^*$}] (E2) at (1.5,1.5);
			\coordinate [label=left:\textcolor{black}{\rotatebox{-45}{\small$<$}}] (E2) at (-0.5,3.9);
			\coordinate [label=below:\textcolor{black}{\rotatebox{135}{\small$<$}}] (E2) at (0.7,2.7);
			\coordinate [label=below:\textcolor{black}{\rotatebox{-45}{\small$<$}}] (E2) at (2.2,1.2);
		\end{tikzpicture}}
		\subfigure[$\mu=\mu_c$]{
		\begin{tikzpicture}[scale=0.8]
			\draw[->] (-2,0) -- (4,0);
			\draw[->] (0,-1) -- (0,5);
			\draw [] (0,3) -- (3,0);
			\draw[dashed] (-2,5) -- (0,3);
			\draw[dashed] (3,0) -- (4,-1);
			\filldraw[color=black] (0,3) circle (1.5pt);
			\coordinate [label=above:\textcolor{black}{\small$\Sigma$}] (E2) at (2.5,0.76);
			\coordinate [label=below:\textcolor{black}{\small$x_0$}] (E2) at (4,0);
			\coordinate [label=left:\textcolor{black}{\small$x_1$}] (E2) at (0,5);
			\coordinate [label=left:\textcolor{black}{\small$\mathbf x_1^*=\mathbf x_2^*$}] (E2) at (0,3);
			\coordinate [label=below:\textcolor{black}{\rotatebox{-45}{\small$<$}}] (E2) at (0.7,2.7);
			\coordinate [label=left:\textcolor{black}{\rotatebox{-45}{\small$<$}}] (E2) at (-0.5,3.9);
		\end{tikzpicture}}
		\subfigure[$\mu>\mu_c$]{
		\begin{tikzpicture}[scale=0.76]
			\draw[->] (-2,0) -- (4,0);
			\draw[->] (0,-1) -- (0,5);
			\draw [] (0,3) -- (3,0);
			\draw[dashed] (-2.5,5.5) -- (0,3);
			\draw[dashed] (3,0) -- (4,-1);
			\filldraw[color=black] (0,3) circle (1.5pt);
			\filldraw[color=black] (-1.5,4.5) circle (1.5pt);
			\coordinate [label=above:\textcolor{black}{\small$\Sigma$}] (E2) at (2.5,0.77);
			\coordinate [label=below:\textcolor{black}{\small$x_0$}] (E2) at (4,0);
			\coordinate [label=left:\textcolor{black}{\small$x_1$}] (E2) at (0,5);
			\coordinate [label=left:\textcolor{black}{\small$\mathbf x_1^*$}] (E2) at (0,3);
			\coordinate [label=left:\textcolor{black}{\small$\mathbf x_2^*$}] (E2) at (-1.5,4.5);
			\coordinate [label=below:\textcolor{black}{\rotatebox{-45}{\small$<$}}] (E2) at (0.7,2.7);
			\coordinate [label=left:\textcolor{black}{\rotatebox{135}{\small$<$}}] (E2) at (-0.5,3.9);
			\coordinate [label=below:\textcolor{black}{\rotatebox{-45}{\small$<$}}] (E2) at (-2.4,5.8);
		\end{tikzpicture}}
	\end{center}
 \captionsetup{width=\linewidth}
 \caption{Transcritical bifurcation associated with the error threshold in the quasispecies model. The segment $\Sigma$ is the biological meaningful region, which is contained in the attraction basin of the equilibria $\mathbf x_1^*$ for $\mu\geq\mu_c$.}
	\label{fig:Trans}
\end{figure}

\begin{Rem}
\label{Remark:Erro}
The classical cuasispecies model corresponds with $r_i=0$ and constant fitness $f_i(t)=f_i$. This system has three equilibrium points; $\mathbf x_0^*=(0,0)\notin \Sigma,$ $\mathbf x_1^*=(0,1)$ and 
\begin{equation}\label{eq:equilibioEsp}
\mathbf x_2^*=(x_2^*,y_2^*)=\left(\frac{{f_0}-{f_1}-{f_0}\mu}{{f_0}-{f_1}},\frac{{f_0}\mu}{f_0-{f_1}}\right),
\end{equation}
where $\mathbf x_1^*,\mathbf x_2^*\in\Sigma$. A transcritical bifurcation occurs at the critical mutation rate $\mu_c=1-f_1/f_0$, indicating a threshold beyond which selection becomes impossible, leading the system into a drift phase~\cite{Sole2006,Sole2021}. The equilibrium $\mathbf x_1^*$ is unstable and $\mathbf x_2^*$ is globally asymptotically stable for $\mu < \mu_c$. At $\mu = \mu_c$ both equilibria $\mathbf x_1^*$ and $\mathbf x_2^*$ collide and interchange stability in a transcritical bifurcation. Hence, for $\mu > \mu_c$, $\mathbf x_1^*$ is globally asymptotically stable and  $\mathbf x_2^*$ unstable (Fig.~\ref{fig:Trans}).

The error threshold is significantly relevant within the quasispecies model. However, when $\xi\neq 0$, this threshold is never reached and $\mathbf x_1^*$ is not an equilibrium point of the system, as discussed in \cite{CrespoCancer2020}.
\end{Rem}

\subsubsection{Periodic replication rates with no time lags} 

This section analyzes the dynamics considering periodic fluctuations in the replication rate. One might expect that a periodic replicative fitness would lead to periodic behavior of the solutions. However, our study will prove otherwise. 

If we consider $f_i(t)=m_i+n\cos(t)$, the CP condition $x_1=1-x_0$ and a reparametrization of the independent variable, we can express system \eqref{eq:16} in terms of the master sequence $x_0$, denoted by $x$ for simplicity. The model reads
\begin{equation}\label{eq:sistred}
	\dot x=(1+a\cos(t))(1-\mu)x-x((1+a\cos(t))x+(r+a\cos(t))(1-x)),
\end{equation}
where $a=n/m_0$ and $r=m_1/m_0.$ This equation has only one equilibria $x^*=0$, whose stability changes with $\mu$. Precisely, following Remark~\ref{Remark:Erro}, we know that $x^*$ is unstable for $\mu\neq\mu_c=1-r$. However, inspecting the dynamics in $\Sigma$, the equilibrium $x^*$ is stable for $\mu\geq\mu_c$. Moreover, considering $x(0)=c$ the solution of \eqref{eq:sistred} can be explicitly computed and is given by
\begin{eqnarray}
\label{eq:solucion}  x(t)&=& \dfrac{ce^{(\mu_c-\mu)t-a\mu\sin(t)}}{1+c\mu_c\displaystyle\int_0^t e^{(\mu_c-\mu)s-a\mu\sin(s)}ds}.
\end{eqnarray}
Note that for any initial condition and parameters, the numerator of \eqref{eq:solucion} is periodic, while the denominator is not. Therefore, the system does not posses periodic solutions for any value of the parameters. We will distinguish several regimes for $x(t)$ according to the sign of $\mu_c-\mu.$ 

\begin{figure}
\begin{tikzpicture}[scale=1]
		\node[inner sep=0pt] (caso1) at (-4.3,0) {\includegraphics[scale=0.8]{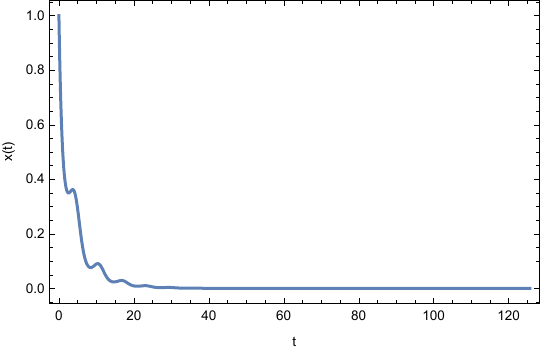}};
		\node[inner sep=0pt] (caso1) at (4.3,0) {\includegraphics[scale=0.8]{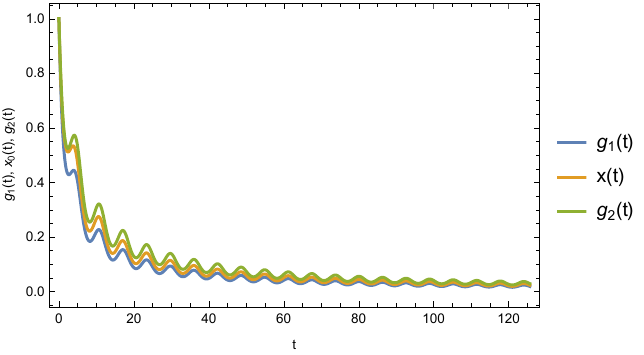}};
			\coordinate [label=below:\textcolor{black}{\small$\mu>\mu_c$}] (E2) at (-4,2);
			\coordinate [label=below:\textcolor{black}{\small$\mu=\mu_c$}] (E2) at (4,2);
		\end{tikzpicture}
   \captionsetup{width=\linewidth}
	\caption{Drift phase arising beyond the critical mutation rate $\mu_c$, with the dominance of mutant sequences i.e., $x(t)=0$ for $m_0=0.3,$ $m_1=0.2,$ $n=0.2,$ $\mu=0.1$.}
	\label{fig:3}
\end{figure}

If $\mu > \mu_c,$ previous treatment of the non-periodic case \cite{Eigen1971,Schuster1994} predicts a lost of genomic information as the population enters into a drift phase. This fact is preserved with a periodic replicative fitness. In one hand, if $\mu=\mu_c$ we can bound the solution \eqref{eq:solucion} as follows
\begin{equation}
  g_1(t)=\dfrac{c}{e^{a\mu \sin(t)}(1+c\mu_c e^{a\mu} t)} \leq x(t)\leq \dfrac{c}{e^{a\mu \sin(t)}(1+c\mu_c e^{-a\mu} t)}=g_2(t).
\end{equation}
Since $g_i(t)\to 0$ as $t\to\infty$ for $i=1,2$, we have that $x(t)\to0$. On the other hand, considering $\mu>\mu_c$, we have that the denominator in \eqref{eq:solucion} is a positive number greater than 1, and the numerator tends to zero as $t\to\infty$. Thus, we conclude that the $x(t)\to0$. We illustrate this behavior in Figure~\ref{fig:3}. 

In the case where $\mu<\mu_c,$ the solution is bounded by 
\begin{equation}
    h_1(t)=\frac{e^{-a \mu  \sin (t)}}{k_{11}+k_{12} e^{-t(\mu_c-\mu)}}\leq x(t) \leq \frac{e^{-a \mu  \sin (t)}}{k_{21}+k_{22} e^{-t(\mu_c-\mu)}}=h_2(t),
\end{equation}
where
\begin{equation}
 k_{11}=\frac{\mu_c e^{a \mu }}{\mu_c-\mu },\quad k_{12}=\frac{\mu_c e^{a \mu }}{\mu -\mu_c}+\frac{1}{c}\quad\text{and}\quad k_{21}=\frac{\mu_c e^{-a \mu }}{\mu_c-\mu },\quad k_{22}=\frac{\mu_c e^{-a \mu }}{\mu -\mu_c}+\frac{1}{c}.
\end{equation}
To understand the behavior of the bounds $h_1(t)$ and $h_2(t)$, we observe that these function are not periodic, but they converge exponentially to the following periodic functions
\begin{equation}
    p_i(t)=\frac{e^{-a \mu  \sin (t)}}{k_{i1}},\quad i=1,2.
\end{equation}
Moreover, the functions $p_i(t)$ are related to the equilibrium $\mathbf x_2^*$ of the corresponding constant fitness model given in \eqref{eq:equilibioEsp}. In particular, we have 
\begin{equation}
\label{eq:Promedios}
\dfrac{1}{2\pi}\int_0^{2\pi} p_1(t) \,dt=k_{11}^{-1}=e^{-a \mu }\, x_2^*,\qquad \dfrac{1}{2\pi}\int_0^{2\pi} p_2(t) \,dt=k_{21}^{-1}=e^{a \mu } \,x_2^*.
\end{equation}
As a result, after a initial drifting stage, the solution $x(t)$ shows a quasi-periodic behavior and is bounded by two quasi-periodic functions oscillating below and above the equilibria of the constant fitness model. We illustrate this feature in Figure~\ref{fig:2}, which is a representative example of numerical simulations for $x(t)$.

It is noteworthy that the bounds $h_i(t)$ converge to $p_i(t)$ independently of the initial condition $x(0)=c$. Therefore, the oscillation around the average values given in \eqref{eq:Promedios} play the role of an attractor for arbitrary solutions. In practice, every solution exhibits a first stage moving toward the bounds imposed by $p_i(t)$. After that, they behave almost periodically.
\begin{figure}
\begin{tikzpicture}[scale=1]
		\node[inner sep=0pt] (caso1) at (0,0) {\includegraphics[scale=1]{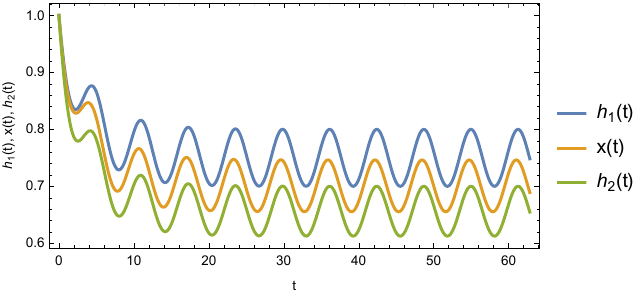}};
			\coordinate [label=below:\textcolor{black}{\small$\mu<\mu_c$}] (E2) at (-0.5,2.3);
		\end{tikzpicture}
	\caption{Quasi-periodic behavior of the the solutions of system \eqref{eq:sistred} for $m_0=0.3,$ $m_1=0.2,$ $n=0.2.$}
	\label{fig:2}
\end{figure}

\subsubsection{Dynamics with time lags and constant replication}
This section analyzes the model with time lags in the synthesis of the sequences assuming a constant replicative fitness. Let us consider system \eqref{eq:16} with $f_0=1,$ $f_1<1$ and $r_0=r_1.$ To simplify the notation, we will henceforth denote $f_1=f$ and $r_i=r$, taking the form
\begin{equation}\label{eq:Delayconstante}
	\begin{split}
		\dot x_{0}&=(1-\mu)x_{0}(t-r)-x_{0}(t)\left(x_0(t-r)+f x_1(t-r)\right) \\
		\dot x_{1}&=\mu x_{0}(t-r)+fx_{1}(t-r)-x_{1}(t)\left(x_0(t-r)+f x_1(t-r)\right).
	\end{split}
\end{equation}
This system has three equilibrium points: $\mathbf x_0^*=(0,0)\notin\Sigma,$ $\mathbf x_1^*=(x_1^*,y_1^*)=(0,1)$ and
\begin{equation}
 \mathbf x_2^*=(x_2^*,y_2^*)=\left(\frac{\mu+f-1}{f-1},\frac{-\mu}{f-1}\right),
\end{equation}
where $\mathbf x_1^*,\mathbf x_2^*\in\Sigma.$ The transcritical bifurcation mentioned in Remark~\ref{Remark:Erro} remains in the system \eqref{eq:Delayconstante} for the critical value $\mu_c=1-f,$ given us the conditions on which every equilibrium exists. 
Previously we showed that the simplex $\Sigma_n$ associated with the CP constraint is invariant even in the presence of lags. In particular, $\Sigma_n$ for system \eqref{eq:Delayconstante} is given by $$x_0(t)+x_1(t)=1.$$ 
Considering only the biological meaningful domine, we are left with the following reduced system
\begin{equation}\label{eq:1616}
\dot{x}(t) = G(x(t), x(t-r))=-f x(t)+(1-\mu)x(t-r)+(f-1)x(t)x(t-r),
\end{equation}
where the master sequence $x_0$ is denoted by $x.$ The reduced system is endowed with the following equilibrium points $x_1^*=0$ and $x_2^*=(\mu+f-1)/(f-1).$

Absolute stability is defined in \cite{Ruan2001}. This type of stability is the main obstacle for using the method given in \cite{Ruan2001} determining periodic orbits. Following the mentioned method, we linearize the equation around $x^*_1,$ $x_2^*$ and observe absolute stability for the system \eqref{eq:1616}.

\begin{The}
Let $x_1^*$ and $x_2^*$ be the equilibria of the equation \eqref{eq:1616}. Then,
\begin{enumerate}
\item[\textit{i)}] If $1-\mu<f,$ $x_1^*$ is unstable and $x_2^*$ is absolutely stable,
\item[\textit{ii)}] If $1-\mu>f,$ $x_1^*$ is absolutely stable and $x_2^*$ is unstable.
\end{enumerate}
\end{The}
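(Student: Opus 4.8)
The plan is to linearize \eqref{eq:1616} about each of its two equilibria, write down the resulting scalar characteristic equation, and then read off delay-independent stability or instability from the standard dichotomy for scalar equations $\dot u(t)=A\,u(t)+B\,u(t-r)$ with real coefficients, in the absolute-stability framework of \cite{Ruan2001}. Reducing the planar system \eqref{eq:Delayconstante} to the one-dimensional equation \eqref{eq:1616} is legitimate thanks to the invariance of $\Sigma$ established in Propositions~\ref{vari_inva}--\ref{Perm_esp_inv}, so it suffices to work with \eqref{eq:1616}.

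\emph{Step 1: linearization.} With $G(x,y)=-fx+(1-\mu)y+(f-1)xy$ we have $G_x(x,x)=-f+(f-1)x$ and $G_y(x,x)=(1-\mu)+(f-1)x$, so the variational equation of \eqref{eq:1616} at an equilibrium $x^*$ is $\dot u(t)=G_x(x^*,x^*)\,u(t)+G_y(x^*,x^*)\,u(t-r)$, with characteristic function $\Delta(\lambda)=\lambda-A-B\,e^{-\lambda r}$. Evaluating at $x_1^*=0$ gives $(A,B)=(-f,\,1-\mu)$, and at $x_2^*=(\mu+f-1)/(f-1)$, using $(f-1)x_2^*=\mu+f-1$, gives $(A,B)=(\mu-1,\,f)$. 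In both cases $A<0$ since $0<f<1$ and $0\le\mu\le 1$, with $|A|=f$ at $x_1^*$ and $|A|=1-\mu$ at $x_2^*$.

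\emph{Step 2: the delay-independent dichotomy.} If $A<0$ and $|B|<|A|$, then $\Delta$ has no root with nonnegative real part for any $r\ge 0$: for $\lambda=\sigma+i\omega$ with $\sigma\ge 0$ one has $|\lambda-A|=|B|\,e^{-\sigma r}\le|B|$, whereas $|\lambda-A|\ge\sigma-A\ge|A|>|B|$, a contradiction; by the linearization principle the corresponding equilibrium of \eqref{eq:1616} is then locally asymptotically stable for every delay, i.e. absolutely stable. If instead $A+B>0$, then $\Delta(0)=-(A+B)<0$ while $\Delta(\lambda)\to+\infty$ as $\lambda\to+\infty$ along the reals, so $\Delta$ has a positive real root for every $r\ge 0$ and the equilibrium is unstable regardless of the time lag. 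Substituting the two coefficient pairs, at $x_1^*$ one gets $|B|=1-\mu$ and $A+B=1-\mu-f$, and at $x_2^*$ one gets $|B|=f$ and $A+B=f-(1-\mu)$; classifying by the sign of $1-\mu-f$ --- equivalently, whether $\mu$ lies below or above $\mu_c=1-f$ --- assigns to each equilibrium one of the two regimes, the transition being exactly the transcritical value of Remark~\ref{Remark:Erro}.

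\emph{Expected main obstacle.} The substantive point is uniformity in $r$: linearization alone controls only $r=0$, and one must preclude a conjugate pair of roots crossing the imaginary axis as $r$ increases; the a priori modulus estimate in Step 2 is exactly what renders the conclusion delay-independent, so the real content is checking that the coefficients coming from the linearization satisfy the sharp inequality $|B|<|A|$ (for stability) or $A+B>0$ (for instability) on the stated parameter ranges, together with the trivial degenerate sub-cases ($B=0$ when $\mu=1$; $x_2^*$ leaving $\Sigma$ once $\mu\ge\mu_c$). Since this absolute stability forbids the Hopf-type bifurcation underlying the periodic-orbit construction of \cite{Ruan2001}, the theorem functions as a no-oscillation statement for \eqref{eq:Delayconstante} under constant replication.
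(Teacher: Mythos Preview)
Your approach is essentially the same as the paper's: linearize \eqref{eq:1616} at each equilibrium, write the scalar characteristic equation $\lambda=A+Be^{-\lambda r}$, and invoke the standard delay-independent dichotomy for $\dot u=Au(t)+Bu(t-r)$. The only difference is that the paper simply cites Theorem~4.7 of \cite{smith2010introduction} for the conclusion, while you sketch the modulus/sign argument directly; both are fine.

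There is, however, a substantive discrepancy you should flag explicitly. Your (correct) computations give, at $x_1^*$, $(A,B)=(-f,1-\mu)$ with $A+B=1-\mu-f$, and at $x_2^*$, $(A,B)=(\mu-1,f)$ with $A+B=f-(1-\mu)$. Applying your dichotomy then yields: if $1-\mu<f$ one has $|B|<|A|$ at $x_1^*$ (absolutely stable) and $A+B>0$ at $x_2^*$ (unstable); if $1-\mu>f$ the roles swap. This is exactly the \emph{opposite} of what the theorem as stated asserts in i) and ii). Your conclusion is the correct one --- it agrees with the $r=0$ ODE case and with the transcritical picture of Remark~\ref{Remark:Erro} ($x_1^*=0$ stable precisely when $\mu>\mu_c=1-f$) --- so the labels in the theorem statement appear to be interchanged, and the paper's proof, which merely restates the claimed conclusion after citing Smith's theorem without checking the inequality direction, inherits the same slip. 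You should state the corrected assignment rather than leaving it implicit in ``classifying by the sign of $1-\mu-f$''.
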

\begin{proof}
In one hand, the linearized equation around $x_1^*=0$ is $\dot x(t)=(1-\mu)x(t-r)-f x(t).$
If we consider $x=e^{zt},$ the characteristic equation becomes
\begin{equation}\label{12}
z+f=(1-\mu)e^{-rz},
\end{equation}
which has infinite complex solutions. However, by the Theorem 4.7 of \cite{smith2010introduction} when $1-\mu<f,$ $x_1^*$ is unstable and $x_2^*$ is absolutely stable.
On the other hand,  the system around $x_2^*$ takes the form
\begin{equation}
\begin{split}
\frac{\partial G}{\partial x(t)}(x^*,x^*)&=-f+(f-1)x^*=\mu-1\\
\frac{\partial G}{\partial x(t-r)}(x^*,x^*)&=1-\mu+(f-1)x^*=f.
\end{split}
\end{equation}
The system becomes $\dot x(t)=(\mu-1)x(t)+fx(t-r)$ and the associated characteristic equation is
\begin{equation}\label{123}
z+(1-\mu)=f e^{-zr}.
\end{equation}
In the same way as the previous equilibrium, given the form of \eqref{123} and by the Theorem 4.7 of \cite{smith2010introduction} when  $1-\mu>f,$ $x_1^*$ is absolutely stable and $x_2^*$ is unstable.
\end{proof}

\begin{Rem}
The previous theorem establishes conditions for instability and absolute stability. The latest prevents the existence of periodic orbits around the equilibrium at hand, while this matter remains inconclusive for an unstable equilibrium. A classical mechanism to determine periodic orbits existence is the Hopf bifurcation. However, this procedure does not apply in our model.
\end{Rem}

\subsection{Dynamics with backward mutation and periodic replication rates}
\label{sec:4}

In this section, we consider arbitrary periodic replication rates $f_0$ and $f_1$ in the presence of backward mutation. For this setting, we establish the existence of at least one positive  $T$-periodic solution for the system \eqref{eq}. Our approach relies on topological Leray-Schauder degree arguments \cite{L_Schauder} and is either valid when considering time lags or not. In what follows, we consider $\mu,\xi\in]0,1[$ and define the ratio $w:=f_{0}/f_{1}$. 

If we first restrict $f_0$ and $f_1$ to be constant functions (period equal zero), it is easy to see that the system \eqref{eq} has three trivial periodic solutions, which are as follows:
\begin{itemize}
	\item If $w \neq 1$, considering $\alpha_1=1+\xi-w (1-\mu)$ and $\alpha_2=1-\xi+w (1-\mu)-2w$,
	\begin{equation}
		\begin{split}\label{equilibria}
			\mathbf{x}^*_0 &= (0,0) \\
			\mathbf{x}^*_1 &= \left(\dfrac{\alpha_1-\sqrt{\alpha_1^2-4\xi(1-w)}}{2(1-w)},\dfrac{\alpha_2+\sqrt{\alpha_1^2-4\xi(1-w)}}{2(1-w)}\right)\\
   			\mathbf{x}^*_2 &= \left(\dfrac{\alpha_1+\sqrt{\alpha_1^2-4\xi(1-w)}}{2(1-w)},\dfrac{\alpha_2-\sqrt{\alpha_1^2-4\xi(1-w)}}{2(1-w)}\right).
		\end{split}
	\end{equation}
	\item If $w=1$ (this case corresponds to neutral mutants),
	\begin{equation}
		\mathbf{x}^*_0 = (0,0), \quad \mathbf{x}^*_1 = \mathbf{x}^*_2 = \left(\dfrac{\xi}{\xi+\mu},\dfrac{\mu}{\xi+\mu}\right).
	\end{equation}
\end{itemize}

A basic analysis shows that $x_1$ and $x_2$ switch positions in the first quadrant as $r$ increases from $r<1$ to $r>1$. This fact ensures that at least one trivial solution always lies in the first quadrant. Moreover, we observe that by perturbing the values of $f_0$ and $f_1$ with non-constant periodic functions, the trivial periodic solutions mentioned above are no longer valid. Next, we will prove that these trivial periodic solutions located in the first quadrant become non-trivial periodic solution for periodic replication rates $f_0$ and $f_1$.



\begin{The}\label{TeoP}  
	Let us consider $\xi\neq 1-\mu$. Then, system \eqref{eq} admits at least one non-trivial $T$-periodic positive solution in $\Sigma$  for all $ t\in[0,T]$, such that,
	\begin{equation}\nonumber
		\min\{1-\mu,\xi\}\leq x_0(t)\leq \max\{1-\mu,\xi\},\quad \min\{1-\xi,\mu\}\leq x_1(t)\leq \max\{1-\xi,\mu\}.
	\end{equation}
\end{The}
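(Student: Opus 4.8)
The plan is to apply the Leray--Schauder continuation principle to a suitable fixed-point operator on the Banach space $C_T = \{u \in C(\mathbb{R}, \mathbb{R}^2) : u(t+T) = u(t)\}$ of continuous $T$-periodic functions. First I would rewrite the search for a $T$-periodic solution of \eqref{eq} as a fixed-point problem: adding a damping term, system \eqref{eq} for $\mathbf{x} = (x_0, x_1)$ can be put in the form $\dot x_i(t) + \lambda_i x_i(t) = \big(\lambda_i x_i(t) + N_i(\mathbf{x})(t)\big)$, where $N_i$ collects the replication/mutation/outflow terms (which depend on the delayed arguments $\mathbf{x}(t-r_0)$, $\mathbf{x}(t-r_1)$). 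Since the scalar operator $u \mapsto \dot u + \lambda u$ is invertible on $C_T$ with a compact (integral, convolution-type) inverse $K_{\lambda}$, one obtains an equivalent equation $\mathbf{x} = \mathcal{T}(\mathbf{x})$ with $\mathcal{T}$ completely continuous on $C_T$ — compactness coming from the smoothing of $K_\lambda$ together with Arzelà--Ascoli, and the delays causing no trouble because time-translation is an isometry of $C_T$.

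Next I would set up the homotopy. The natural choice is to deform the non-constant periodic rates $f_0(t), f_1(t)$ to their constant ``averaged'' counterparts (or simply scale the oscillatory part to zero), say $f_i^{s}(t) = (1-s)\,\bar f_i + s f_i(t)$ for $s \in [0,1]$, giving a family $\mathcal{T}_s$ with $\mathcal{T}_1 = \mathcal{T}$ and $\mathcal{T}_0$ the operator for the constant-rate system whose periodic solutions are exactly the equilibria listed in \eqref{equilibria}. The heart of the argument is then the \emph{a priori bound}: I must show that any $T$-periodic solution of the homotopy at parameter $s$ satisfies the box estimates $\min\{1-\mu,\xi\}\le x_0(t)\le \max\{1-\mu,\xi\}$ and the analogous bounds for $x_1$, \emph{uniformly in $s$}. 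This is where the specific structure of \eqref{eq} is used: on the invariant simplex $\Sigma$ (which by Proposition~\ref{Perm_esp_inv} contains any positive $T$-periodic solution), one has $x_0 + x_1 = 1$, so the outflow term $\Phi$ is a convex-type average of $f_0$ and $f_1$, and the $x_0$-equation becomes $\dot x_0 = f_0(1-\mu)x_0(t-r_0) + f_1\xi(1-x_1(t-r_1)) - x_0 \Phi$; evaluating at a maximum $t^*$ of $x_0$ (where $\dot x_0(t^*)=0$) and a minimum, and using $0 < x_i < 1$, forces $x_0(t^*)$ between $\xi$ and $1-\mu$ — a standard ``sign of the derivative at an extremum'' argument, being careful that at an interior extremum of $x_0$ the delayed value $x_0(t^*-r_0)$ need not equal $x_0(t^*)$, so one argues with the global max/min over the period. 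Once the a priori bound is in place, choose an open bounded set $\Omega \subset C_T$ (an open neighbourhood of the box, staying inside the positive cone / simplex) containing all periodic solutions of the homotopy; the Leray--Schauder degree $\deg(I - \mathcal{T}_s, \Omega, 0)$ is then well-defined and independent of $s$.

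Finally I would compute $\deg(I - \mathcal{T}_0, \Omega, 0)$ for the constant-rate system. Here periodic solutions reduce to the equilibria $\mathbf{x}^*_0, \mathbf{x}^*_1, \mathbf{x}^*_2$ of \eqref{equilibria}; $\mathbf{x}^*_0 = (0,0)$ lies outside $\Omega$ (it is not positive), and by the ``switching'' remark preceding the theorem exactly one of $\mathbf{x}^*_1, \mathbf{x}^*_2$ lies in the open first quadrant inside the box, so $\Omega$ contains a single equilibrium. Its degree contribution is $\pm 1$, computed via the Brouwer index of the finite-dimensional vector field $-F$ at that equilibrium (the linearization is hyperbolic for $\xi \neq 1-\mu$, which is exactly the hypothesis excluding the transcritical collision), giving $\deg(I-\mathcal{T}_0,\Omega,0)\neq 0$. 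By homotopy invariance $\deg(I - \mathcal{T}, \Omega, 0)\neq 0$, so $\mathcal{T}$ has a fixed point in $\Omega$, i.e. \eqref{eq} has a $T$-periodic solution obeying the stated bounds; it is non-trivial (non-constant) because, as noted before the statement, the constant candidates no longer solve the non-constant-rate system, and it lies in $\Sigma$ by Proposition~\ref{Perm_esp_inv}. The main obstacle I anticipate is making the a priori bound genuinely uniform in $s$ and in the delays $r_0, r_1$ — the extremum argument must be run on the global period-extrema rather than pointwise, and one must verify the estimate survives the degenerate constant sub-case $w=1$ where $\mathbf{x}^*_1 = \mathbf{x}^*_2$; ensuring $\Omega$ can be chosen to avoid the boundary of the positive cone (so that $0 \notin (I-\mathcal{T}_s)(\partial\Omega)$) is the delicate point that ties the whole continuation together.
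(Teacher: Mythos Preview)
Your proposal follows the same overall Leray--Schauder continuation strategy as the paper, with the same a priori box bounds obtained by the ``vanishing derivative at an extremum'' argument and the same appeal to Proposition~\ref{Perm_esp_inv} at the end. The scaffolding, however, is different. The paper uses the Mawhin coincidence-degree operator $\mathcal{A}=\mathcal{P}+\mathcal{Q}\mathcal{N}+\mathcal{K}(I-\mathcal{Q})\mathcal{N}$ (projection to initial value, average of the nonlinearity, antiderivative of its mean-free part) and takes the homotopy $\mathcal{H}[\lambda,\cdot]=I-\mathcal{P}-\mathcal{Q}\mathcal{N}-\lambda\mathcal{K}(I-\mathcal{Q})\mathcal{N}$, scaling only the last term. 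At $\lambda=0$ the range of $I-\mathcal{H}[0,\cdot]$ is contained in $\mathbb{R}^2$, so the Leray--Schauder degree reduces \emph{automatically} to a Brouwer degree for the averaged algebraic system $\overline{f_0}(1-\mu)x+\overline{f_1}\xi y-x(\overline{f_0}x+\overline{f_1}y)=0$ (and its companion), which the paper then shows has a unique zero in the open box $]\eta,\nu[\times]1-\nu,1-\eta[$ with index~$+1$.

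Your homotopy, deforming $f_i$ to the constants $\bar f_i$, lands at $s=0$ on an \emph{autonomous but still delayed} infinite-dimensional problem, and your assertion that its $T$-periodic solutions ``reduce to the equilibria $\mathbf{x}^*_0,\mathbf{x}^*_1,\mathbf{x}^*_2$'' is not automatic: autonomous retarded systems can carry non-constant periodic orbits, so you would need either to rule these out inside the box for all delays $(r_0,r_1)$ or to compute the degree by a further reduction (e.g.\ a secondary homotopy collapsing the delays to zero, or switching to the Mawhin decomposition at that stage). This is the one genuine gap in your plan. The paper's choice of homotopy sidesteps the issue entirely, because the $\lambda$-scaling produces a finite-dimensional problem at the easy end in which the delays have already been averaged out through $\mathcal{Q}\mathcal{N}$. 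Apart from this point, your damping-and-invert construction of the compact operator is equivalent in effect to the paper's, and your a priori bound sketch matches Lemma~\ref{cots1}; note that the paper takes $\Omega$ to be the open box itself (not a neighbourhood of it), and Lemma~\ref{cots1} shows solutions of the homotopy never touch $\partial\Omega$, which is exactly what admissibility requires.
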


\begin{The}\label{Teo2}  
	Let $x_0,x_1$ be positive T-periodic solutions of the system \eqref{eq} in the interval $]0,1[$. Then it is fulfilled that $(x_0(t),x_1(t))\in\Sigma$, $\forall t\in[0,T]$ and
	\begin{itemize}
		\item[\textit{i)}] If $1-\mu>\xi$, then 
		\begin{equation}\nonumber
			\xi<x_0(t)<(1-\mu),\ \forall t\in[0,T]\quad \text{and}\quad \mu<x_1(t)<1-\xi,\ \forall t\in[0,T].
		\end{equation}
		\item[\textit{ii)}] If $1-\mu<\xi$, then 
		\begin{equation}\nonumber
			(1-\mu)<x_0(t)<\xi,\ \forall t\in[0,T]\quad \text{and}\quad 
			1-\xi<x_1(t)<\mu,\ \forall t\in[0,T].
		\end{equation}
		\item[\textit{iii)}] If $(1-\mu)=\xi$, then $x_0(t)=\xi$ and $x_1(t)= 1-\xi$ for all $t\in[0,T].$	
	\end{itemize}
\end{The}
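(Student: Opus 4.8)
The plan is to combine Proposition~\ref{Perm_esp_inv}, which confines any positive periodic solution to the invariant segment $\Sigma$, with a quantitative inspection of system~\eqref{eq} at the extremal values of its coordinates. Since $(x_0,x_1)$ is a positive $T$-periodic solution of \eqref{eq} taking values in $]0,1[^2$, and \eqref{eq} is the $n=1$ instance of \eqref{Sist_Gen}, Proposition~\ref{Perm_esp_inv} gives at once $(x_0(t),x_1(t))\in\Sigma$ for all $t$, i.e. $x_1(t)=1-x_0(t)$. This settles the first assertion and reduces the remaining inequalities to controlling the range of the scalar $T$-periodic $C^1$ function $x_0$, the bounds for $x_1$ then following by negation.

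For the bounds on $x_0$, I would extend $x_0$ periodically to $\R$; being $C^1$, it attains its maximum $M$ at some $t^*$ and its minimum $m$ at some $t_{**}$, with $\dot x_0(t^*)=\dot x_0(t_{**})=0$. Evaluating the first equation of \eqref{eq} at such a critical point $\tau$ and setting $a:=f_0(\tau)x_0(\tau-r_0)>0$, $b:=f_1(\tau)x_1(\tau-r_1)>0$ (strictly positive because $x_0,x_1>0$ and $f_0,f_1>0$), one obtains $x_0(\tau)(a+b)=(1-\mu)a+\xi b$, that is,
\[
a\bigl(x_0(\tau)-(1-\mu)\bigr)+b\bigl(x_0(\tau)-\xi\bigr)=0 .
\]
Hence $x_0(\tau)$ is a convex combination of $1-\mu$ and $\xi$ with strictly positive weights, so it lies strictly between $\min\{1-\mu,\xi\}$ and $\max\{1-\mu,\xi\}$ whenever $1-\mu\neq\xi$. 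In case (i) ($1-\mu>\xi$): at $\tau=t^*$, if $M\ge 1-\mu$ the left-hand side would be $\geq b(M-\xi)>0$, impossible, so $M<1-\mu$; at $\tau=t_{**}$, if $m\le\xi$ the left-hand side would be $\leq a(m-(1-\mu))<0$, again impossible, so $m>\xi$. Therefore $\xi<x_0(t)<1-\mu$ and $\mu=1-(1-\mu)<x_1(t)<1-\xi$ for all $t$. Case (ii) ($1-\mu<\xi$) is the mirror image (swap the two summands), giving $1-\mu<x_0(t)<\xi$ and $1-\xi<x_1(t)<\mu$.

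It remains to handle the degenerate case (iii), $1-\mu=\xi=:c$ (equivalently $1-\xi=\mu$), where the identity above collapses to $(a+b)(x_0(\tau)-c)=0$ and carries no information. Here I would use the exact simplification of \eqref{eq} on the threshold: the two equations become $\dot x_0(t)=(c-x_0(t))\,\Phi(t)$ and $\dot x_1(t)=(\mu-x_1(t))\,\Phi(t)$, where $\Phi(t)=f_0(t)x_0(t-r_0)+f_1(t)x_1(t-r_1)$ is a \emph{given} positive continuous $T$-periodic function. Then $y:=x_0-c$ solves the linear scalar equation $\dot y=-\Phi\,y$, so $y(t)=y(0)\exp\!\bigl(-\int_0^t\Phi(s)\,ds\bigr)$; periodicity forces $y(0)=y(T)=y(0)\exp\!\bigl(-\int_0^T\Phi(s)\,ds\bigr)$, and since $\int_0^T\Phi(s)\,ds>0$ this yields $y(0)=0$, hence $x_0\equiv c=\xi$; the same argument applied to $x_1-\mu$ gives $x_1\equiv\mu=1-\xi$.

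The only genuinely delicate point is case (iii): the averaging identity that drives (i)--(ii) degenerates precisely when $1-\mu=\xi$, and it must be replaced by the observation that at the threshold the system reduces to an explicitly solvable linear equation for the deviation from the constant state. Everything else is routine sign bookkeeping, once one notices that at an interior extremum the right-hand side of \eqref{eq} expresses $x_0$ as a convex combination of $1-\mu$ and $\xi$. This sharpens Theorem~\ref{TeoP}, which supplied only the non-strict bounds via the Leray--Schauder degree, into strict bounds valid for \emph{every} positive periodic solution.
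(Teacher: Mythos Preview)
Your argument is correct and, for parts (i)--(ii), proceeds exactly as the paper does: evaluate the first equation of \eqref{eq} at an extremum of $x_0$ and read off a sign contradiction. Your convex-combination phrasing is a clean way to package the paper's inequalities. One simplification you make that the paper does not: having invoked Proposition~\ref{Perm_esp_inv} to get $x_1=1-x_0$, you deduce the $x_1$-bounds directly from the $x_0$-bounds, whereas the paper repeats the extremum analysis on the second equation of \eqref{eq}.

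For part (iii) the approaches diverge. The paper simply reuses the extremum identity: if $M_{x_0}>\xi$ then with $1-\mu=\xi$ both sides of the relation have strict and opposite signs, contradiction; likewise $m_{x_0}<\xi$ is ruled out, so $x_0\equiv\xi$. Your claim that the identity $(a+b)(x_0(\tau)-c)=0$ ``carries no information'' is mistaken---since $a+b>0$ it gives $x_0(\tau)=c$ at every critical point, hence $M=m=c$ and $x_0\equiv c$ immediately. Your linear-ODE alternative (writing $\dot y=-\Phi y$ for $y=x_0-c$ and using periodicity plus $\int_0^T\Phi>0$) is perfectly valid, but it is an unnecessary detour rather than a rescue of a degenerate case.
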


We prove these theorems in the following section.

\begin{Rem}
Notice that the previous results not only establish the existence of periodic orbits in the meaningful biological region $\Sigma$, but they also determine the bounds for the oscillations of these periodic orbits.
\end{Rem}

\begin{figure}
	\begin{center}
		\begin{tikzpicture}[scale=1.5]
			\draw[->] (0,0) -- (3.5,0);
			\draw[->] (0,0) -- (0,3.5);
			\draw [line width = 1pt] (0,3) -- (3,0);
			\draw [line width = 1pt, color = red] (0.75,2.25) -- (2.25,0.75);
			\draw[dashed] (0,2.25) -- (2.25,2.25);
			\draw[dashed] (0.75,0) -- (0.75,2.25);
			\draw[dashed] (0,0.75) -- (2.25,0.75);
			\draw[dashed] (2.25,0) -- (2.25,2.25);
			\coordinate [label=below:\textcolor{black}{\small$\Sigma$}] (E2) at (0.5,3.25);
			\coordinate [label=below:\textcolor{black}{\small$x_0$}] (E2) at (3.5,0);
			\coordinate [label=left:\textcolor{black}{\small$x_1$}] (E2) at (0,3.5);
			\coordinate [label=below:\textcolor{black}{\small $\nu$}] (E2) at (2.25,0);
			\coordinate [label=below:\textcolor{black}{\small$\eta$}] (E2) at (0.75,0);
			\coordinate [label=left:\textcolor{black}{\small$1-\eta$}] (E2) at (0,2.25);
			\coordinate [label=left:\textcolor{black}{\small$1-\nu$}] (E2) at (0,0.75);
			\fill[gray!35,nearly transparent] (0.75,0.75) -- (0.75,2.25) -- (2.25,2.25) -- (2.25,0.75) -- cycle;
		\end{tikzpicture}
	\end{center}
	\caption{In red, we can observe the segment containing the periodic orbit. 
	}
	\label{fig:PeriodicOrbit}
\end{figure}
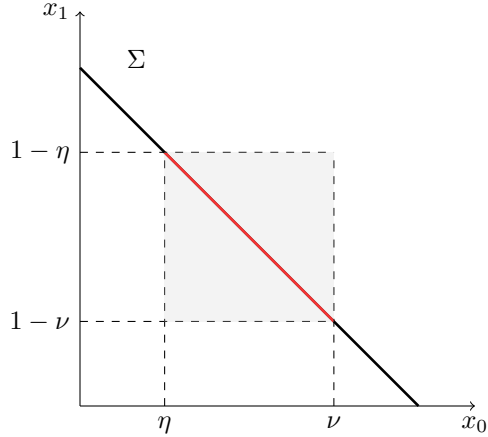

Firstly, we develop the theoretical framework to prove Theorem~\ref{TeoP} and Theorem~\ref{Teo2}. Then, the mentioned proofs are given at the end of this section.

We aim to reformulate the task of discovering $T$-periodic solutions by framing it as a fixed-point problem associated with a relatively compact operator for the system given in equation \eqref{eq}.

Let us define $C_T=\left\{ u \in C(\R): u(t-r)=u(t+r) \mbox{ for }t\in \R\right\}$ and the operator $$\mathcal{A} :C_T\times C_T\to C_T\times C_T,$$  
\begin{equation}
	\quad \mathcal{A} [x_0,x_1] =\mathcal P[x_0,x_1]+\mathcal Q\mathcal N[x_0,x_1] +\mathcal K(I-\mathcal Q)\mathcal N[x_0,x_1] ,
\end{equation}
where  $\mathcal P[x_0,x_1]=(x_0(0),x_1(0)),$ $\mathcal Q[x_0,x_1]=(\overline{x_0},\overline{x_1}),$ $\overline{x_0}=\frac{1}{T}\int_0^Tx_0(t)dt,$
$$\mathcal N[x_0,x_1] (t)=\left(\begin{array}{lr} f_{0}(t)(1-\mu)x_{0}(t-r_0)+f_1(t)\xi x_1(t-r_1)-x_{0}(t)\left(f_0(t)x_0(t-r_0)+f_1(t)x_1(t-r_1)\right)\\   f_{0}(t)\mu x_{0}(t-r_0)+f_{1}(t)(1-\xi)x_{1}(t-r_1)-x_{1}(t)\left(f_0(t)x_0(t-r_0)+f_1(t)x_1(t-r_1)\right)\end{array}\right)$$
and
$$\mathcal K[x_0,x_1]=\left(\int_0^t x_0(s)ds,\int_0^t x_1(s)ds\right).$$

Note that this operator is entirely continuous, and thus the \textit{Leray-Schauder Degree} is applicable. Furthermore, the periodic boundary value problem for \eqref{eq} is equivalently transformed into the fixed-point problem for an operator equation:
\begin{equation*}
	(x_0,x_1) =\mathcal{A}[x_0,x_1] ,\quad (x_0,x_1) \in C_T\times C_T.
\end{equation*}
The crucial step in establishing the validity of Theorems \ref{TeoP} is to demonstrate that the degree of the operator $I-\mathcal{A}$ on a suitable open set is non-zero.   Let $\eta, \nu \in \mathbb{R}$, $\eta =\min\{1 - \mu, \xi\}$, $\nu=\max\{1 - \mu, \xi\}  $ and consider $\xi \neq 1 - \mu$. Then, we define the open set
\begin{gather*}
	\Omega:=\left\{(x_0,x_1)\in C_T\times C_T:   \eta<x_0(t)<\nu, \mbox{  } 1-\nu<x_1(t)<1-\eta, \mbox{ }\forall t\in[0,T]\right\}\end{gather*}
and the homotopy $\mathcal{H}:[0,1]\times C_T\times C_T\to C_T\times C_T,$ defined by
\begin{equation*} 
	\mathcal{H}[\lambda,x_0,x_1]=(x_0,x_1)-(\mathcal P[x_0,x_1]+\mathcal Q\mathcal N[x_0,x_1]+\lambda \mathcal K(I-\mathcal Q)\mathcal N[x_0,x_1]).
\end{equation*}
Before demonstrating the admissibility of the homotopy, we first compute $d_{LS}(\mathcal{H}[0,\cdot],\Omega,0)$. This computation is particularly straightforward due to the fact that $(I-\mathcal{H}[0,\cdot])(\overline{\Omega})\subseteq\mathbb{R}^2$. Exploiting this condition allows us to reduce its calculation to the finite-dimensional case, namely the Brouwer degree. To accomplish this, we formally establish the following result:
\begin{Pro}\label{grado_finit}
      $d_{LS}(\mathcal{H}[0,\cdot],\Omega,0)=1$.
 \end{Pro}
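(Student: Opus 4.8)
The statement asserts that the Leray--Schauder degree of $\mathcal{H}[0,\cdot]$ on $\Omega$ at $0$ equals $1$. The key structural observation, already flagged in the excerpt, is that $(I-\mathcal{H}[0,\cdot])(\overline{\Omega})\subseteq\mathbb{R}^2$: indeed, $I-\mathcal{H}[0,\cdot] = \mathcal P + \mathcal Q\mathcal N$, and both $\mathcal P[x_0,x_1]=(x_0(0),x_1(0))$ and $\mathcal Q\mathcal N[x_0,x_1]=(\overline{\mathcal N_0},\overline{\mathcal N_1})$ take values in the two-dimensional subspace of constant functions in $C_T\times C_T$. The plan is therefore to invoke the reduction property of the Leray--Schauder degree: if $I-\mathcal{H}[0,\cdot]$ maps into a finite-dimensional subspace $V\cong\mathbb{R}^2$, then $d_{LS}(\mathcal{H}[0,\cdot],\Omega,0)=d_B(\mathcal{H}[0,\cdot]|_V,\Omega\cap V,0)$, the Brouwer degree of the restriction to $V$.

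First I would identify $V$ with $\mathbb{R}^2$ via constant functions $(c_0,c_1)$, and note that $\Omega\cap V$ is simply the open rectangle $R=(\eta,\nu)\times(1-\nu,1-\eta)$. Restricted to constant inputs $(c_0,c_1)$, the nonlinearity $\mathcal N$ becomes the autonomous vector field evaluated on constants, and its average $\mathcal Q\mathcal N$ over one period reduces to the scalar pair
\begin{equation*}
\left(\overline{f_0}(1-\mu)c_0+\overline{f_1}\xi c_1 - c_0(\overline{f_0}c_0+\overline{f_1}c_1),\ \overline{f_0}\mu c_0+\overline{f_1}(1-\xi)c_1 - c_1(\overline{f_0}c_0+\overline{f_1}c_1)\right),
\end{equation*}
so that $\mathcal{H}[0,\cdot]|_V(c_0,c_1) = (c_0,c_1)-(c_0,c_1)-\mathcal Q\mathcal N(c_0,c_1) = -\mathcal Q\mathcal N(c_0,c_1)$; thus finding zeros of $\mathcal{H}[0,\cdot]|_V$ is exactly solving $\mathcal Q\mathcal N(c_0,c_1)=0$, i.e. locating the equilibria of the averaged constant-fitness system. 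I would then show this finite-dimensional map has no zeros on $\partial R$ (admissibility) and compute its Brouwer degree to be $1$, either by a direct homotopy to a simple linear map whose degree is manifestly $1$, or by localizing the unique zero inside $R$ and computing the sign of the Jacobian determinant there. The bounds $\eta=\min\{1-\mu,\xi\}$, $\nu=\max\{1-\mu,\xi\}$ are precisely chosen so that the relevant equilibrium (the analogue of $\mathbf{x}_1^*$ in \eqref{equilibria} with averaged rates) sits strictly inside $R$, the other equilibria lying outside.

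\textbf{Main obstacle.} The delicate point is the admissibility check on $\partial R$ combined with getting the degree equal to $1$ rather than just nonzero: one must verify that on each of the four edges of the rectangle the averaged vector field $\mathcal Q\mathcal N$ points consistently (e.g. inward, after an appropriate sign normalization), which amounts to sign analysis of expressions like $\overline{f_0}(1-\mu)c_0 + \overline{f_1}\xi c_1 - c_0(\overline{f_0}c_0+\overline{f_1}c_1)$ at $c_0=\eta$ and $c_0=\nu$. Exploiting that along $\Sigma$ one has $c_0+c_1=1$ — or more carefully, using the invariance of $\Sigma$ from Proposition~\ref{vari_inva} restricted to constants to reduce to the scalar equation $\dot x = f_0(1-\mu)x + f_1\xi(1-x) - x(f_0 x + f_1(1-x))$ — the boundary behaviour becomes a one-variable sign computation: the right-hand side is positive at $x=\eta$ and negative at $x=\nu$, so the degree of the scalar map is $-1$ (or $+1$ after the sign convention in $\mathcal H$), and the two-dimensional degree follows by the product/excision structure. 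I would present the argument in the reduced scalar form to keep this sign bookkeeping transparent, then lift back to conclude $d_{LS}(\mathcal{H}[0,\cdot],\Omega,0)=1$.
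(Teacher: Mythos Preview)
Your overall skeleton --- reduce $d_{LS}$ to the Brouwer degree of $-\mathcal Q\mathcal N$ on the rectangle $R=(\eta,\nu)\times(1-\nu,1-\eta)$, verify there are no zeros on $\partial R$, and then compute the degree --- is exactly what the paper does. The paper executes the last step via your option ``localize the unique zero and compute the Jacobian sign'': from the relations \eqref{y=}--\eqref{x=} it derives monotonicity of the two implicit curves, shows they meet exactly once in $R$, and then evaluates $\sgn\det D(-\mathcal Q\mathcal N)$ at that point to get $1$.

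Where your plan diverges, and where there is a genuine gap, is the final commitment to the ``reduced scalar form'' on $\Sigma$. The domain $\Omega\cap\mathbb R^2$ is the full two-dimensional rectangle $R$, not its intersection with $\Sigma$, so two things go wrong. First, admissibility must be verified on all four edges of $R$; the scalar check at the two points $x=\eta$ and $x=\nu$ on $\Sigma$ says nothing about, e.g., the horizontal edges $c_1=1-\eta$, $c_1=1-\nu$ with $c_0\in(\eta,\nu)$ arbitrary. Second, the two-dimensional Brouwer degree on $R$ does not follow from a one-dimensional degree on a segment by any automatic ``product/excision'' --- excision only lets you shrink the domain around the zero set, it does not drop a dimension. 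To salvage the scalar idea you need the extra observation (which you do not state) that summing the two components of $\mathcal Q\mathcal N$ gives $(\overline{f_0}c_0+\overline{f_1}c_1)(1-c_0-c_1)$; this both forces every interior zero into $\Sigma$ and supplies the transverse Jacobian factor, after which $\det D(-\mathcal Q\mathcal N)$ genuinely factors as (transverse sign)$\times$(scalar sign). Without that step the ``lift back'' is unjustified.

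As a side remark, your edge-sign comment actually contains a cleaner route than either your scalar reduction or the paper's argument: one checks directly that the first component of $\mathcal Q\mathcal N$ is strictly positive on all of $\{c_0=\eta\}\times[1-\nu,1-\eta]$ and strictly negative on $\{c_0=\nu\}\times[1-\nu,1-\eta]$, and symmetrically for the second component on the horizontal edges. Then $-\mathcal Q\mathcal N$ is admissibly homotopic on $R$ to the shifted identity, giving degree $1$ with no uniqueness or Jacobian work at all. The paper does not take this shortcut; its boundary analysis (Cases 1--4) is used only to exclude zeros on $\partial R$, and the heavy lifting is the interior uniqueness-plus-Jacobian computation.
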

 \begin{proof}
     As mentioned before, $(I-\mathcal{H}[0,\cdot])(\overline{\Omega})\subseteq\mathbb{R}^2$ implies that
\begin{equation*}
    d_{LS}(\mathcal{H}[0,\cdot],\Omega,0)=d_{B}(\mathcal{H}[0,\cdot]_{\overline{\Omega\cap\mathbb{R}^2}},\Omega\cap\mathbb{R}^2,0)=d_{B}(- \mathcal{Q}\mathcal{N}[\cdot]_{\overline{\Omega\cap\mathbb{R}^2}},\Omega\cap\mathbb{R}^2,0).
\end{equation*}
If $\mathcal{Q}\mathcal{N}(x,y)=0$, then we obtain the system
\begin{equation}\label{eq:SystemEquilibria}
	\begin{split}
		&\overline{f_0}(1-\mu)x+\overline{f_1} \xi y-x(\overline{f_0}x+\overline{f_1}y)=0,  \\
		&\overline{f_0}\mu x+\overline{f_1}(1-\xi)y-y(\overline{f_0}x+\overline{f_1}y)=0.
	\end{split}
\end{equation}
From system \eqref{eq:SystemEquilibria} we get
\begin{equation}\label{y=}
    \overline{f_1}y(x-\xi)=\overline{f_0}x((1-\mu)-x)
\end{equation}
\begin{equation}\label{x=}
    \overline{f_0}x(y-\mu)= \overline{f_1}y((1-\xi)-y).
\end{equation}
Furthermore  from \eqref{y=} and \eqref{x=} we obtain, respectively,
\begin{equation}\label{y'=}
    \dfrac{dy}{dx}=-\left(\frac{\overline{f_0}}{\overline{f_1}}\right)\frac{x(x-\xi) +\xi((1-\mu)-x)}{(x-\xi)^2},\quad
    \dfrac{dx}{dy}=-\left(\frac{\overline{f_1}}{\overline{f_0}}\right)\frac{y(y-\mu)+\mu((1-\xi)-y)}{(y-\mu)^2}.
\end{equation}
We will prove that $\mathcal{Q}\mathcal{N}(x,y)\neq 0$, $\forall (x,y)\in\partial\left([\eta,\nu]\times[1-\nu,1-\eta]\right)$. For this, we assume that there exists $(x_0,y_0)\in\left(\partial[\eta,\nu]\times[1-\nu,1-\eta]\right)$ that satisfies system \eqref{eq:SystemEquilibria}, and we will rule out this possibility through the following cases.
\begin{itemize}
    \item[Case 1.] If $\xi<1-\mu$, $y\in[1-\nu,1-\eta]$ and $x=\eta=\xi$  our $x=\nu=1-\mu$ then from \eqref{y=} we reach a contradiction.\\      
\item[Case 2.] If $\xi<1-\mu$, $x\in[\eta,\nu]$ and  $y=(1-\nu)=\mu$  our $y=(1-\eta)=1-\xi$  then from \eqref{x=} we reach a contradiction. \\  
    \item[Case 3.]  If $\xi>1-\mu$,   $y\in[1-\nu,1-\eta]$ and $x=\eta$ our $x=\nu$. This case is similar to case 1.
    \item[Case 4.]  If $\xi>1-\mu$,   $x\in[\eta,\nu]$ and $y=1-\nu$ our $x=1-\eta$. This case is similar to case 2.
\end{itemize}
Therefore, we can now compute the Brouwer degree. For this, we assert that the system of equations \eqref{eq:SystemEquilibria} has only one solution. In fact, from Cases 1, 2, 3, and 4 it follows that if $(x_0, y_0)$ satisfies \eqref{eq:SystemEquilibria}, then $(x_0, y_0) \in ]\eta, \nu[ \times ]1 - \nu, 1 - \eta[$. Furthermore, in the case $\xi < 1 - \mu$, we observe the following: considering $(a,y_a) $ a point of \eqref{y=},  $(b,y_b) $ a point of \eqref{x=}, $w_0=\overline{f_0}/\overline{f_1}$   
and   the monotony given by\eqref{y'=} then if   $a,b\to \xi^{+}$  we obtain that
\begin{align*}
    y_b&\to \left(\frac{1}{2} \left(1-\xi (w_0+1)+\sqrt{4 \xi \mu w_0+(\xi (w_0+1)-1)^2} \right)\right)^-\\&<\frac{1}{2} \left(1-\xi (w_0+1)+\sqrt{(\xi (w_0-1)+1)^2}\right)\\
    &=1-\xi\\
    &<y_a\\
    &\to+\infty.
\end{align*}
  and if we consider that $a,b\to(1-\mu)^-$
  we obtain that
\begin{align*}
    y_b& \to \frac{1}{2} \left(\sqrt{((1-\xi)-(1-\mu)w_0)^2+4\mu (1-\mu)  w_0}+(1-\xi)-(1-\mu)w_0  \right)^+ \\
    & > \frac{1}{2} \left(\sqrt{(\mu-(1-\mu)w_0)^2+4 \mu(1-\mu) w_0}+(1-\xi)-(1-\mu)w_0  \right) \\
    & = \frac{1}{2} \left(\mu+(1-\xi) \right) \\
    &>\mu\\
    &>y_a\\
    &\to 0^+.
\end{align*}

 Then from  \eqref{y'=} we can conclude that there is a single point of intersection. A symmetric analysis is followed in the case $\xi > 1 - \mu$, leading to the same conclusion. Thus, we have the following:
\begin{align*}
    &d_{B}(- \mathcal Q\mathcal N[\cdot]_{\overline{\Omega\cap\R^2}},\Omega\cap\R^2,0)\\
    &\qquad\quad=\sgn\left(\ \overline{f_0}(1-\mu-2x_0)-\overline{f_1}y_0)(\overline{f_1}(1-\xi-2y_0)-\overline{f_0}x_0)\overline{f_0}~\overline{f_1}(\xi-x_0)(\mu-y_0) \right),
\end{align*}
and considering that $(x_0,y_0)$ satisfy \eqref{y=} and \eqref{x=} we have that
\begin{align*}
    &d_{B}(- \mathcal Q\mathcal N[\cdot]_{\overline{\Omega\cap\R^2}},\Omega\cap\R^2,0)\\
    &\qquad=\sgn\left( \overline{f_0}~\overline{f_1}\left(\frac{(1-\mu)-x_0}{x_0-\xi}+x_0\right)\left(\frac{(1-\xi)-y_0}{y_0-\mu}+y_0\right) -\overline{f_0}~\overline{f_1}(\xi-x_0)(\mu-y_0) \right)\\
    &\qquad= 1.
\end{align*}
  \end{proof}

The upcoming lemma establishes the admissibility of the homotopy $\mathcal H$ and allow us to compute the degree of the operator $I-\mathcal{A}.$

\begin{Lem}\label{cots1}
	Considering all the hypotheses of Theorem \ref{TeoP}. Then, every solution of $$\mathcal H\left[\lambda, x_0, x_1\right] = 0$$
 is contained in $\Omega,$ for all $\lambda\in[0, 1]$ and $t\in\mathbb R.$ 
\end{Lem}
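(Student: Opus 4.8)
The plan is to establish \emph{a priori} bounds on any solution of the homotopy equation $\mathcal{H}[\lambda,x_0,x_1]=0$ and show these bounds force the solution strictly inside $\Omega$. First I would observe that $\mathcal{H}[\lambda,x_0,x_1]=0$ is equivalent to the coupled conditions $x_i(0)=\mathcal P[x_0,x_1]_i$, $\overline{\mathcal N[x_0,x_1]_i}=0$ (the mean-value/solvability condition from the $\mathcal{Q}\mathcal{N}$ term), and $\dot x_i(t)=\lambda\,\mathcal N[x_0,x_1]_i(t)$ together with the constraint that the full solution reproduces the $C_T$ symmetry. From $\dot x_i=\lambda\mathcal N_i$ one reads off that $x_0,x_1$ solve a scaled version of \eqref{eq}: writing $\Phi(\mathbf x)(t)=f_0(t)x_0(t-r_0)+f_1(t)x_1(t-r_1)$, we get
\begin{equation}\nonumber
\dot x_0(t)=\lambda\big(f_0(t)(1-\mu)x_0(t-r_0)+f_1(t)\xi x_1(t-r_1)-x_0(t)\Phi(\mathbf x)(t)\big),
\end{equation}
and similarly for $x_1$. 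The key structural facts I would exploit are: (i) adding the two equations gives $\dot z=\lambda(1-z)\Phi$ with $z=x_0+x_1$, so by the argument of Proposition~\ref{Perm_esp_inv} (a periodic $z$ must hit a point where $\dot z=0$, forcing $z\equiv 1$) any periodic solution lies on $\Sigma$; (ii) positivity is preserved, since if some $x_i$ touches $0$ the mutation-inflow terms ($f_0(t)\mu x_0(t-r_0)$ for $x_1$, and $f_0(t)(1-\mu)x_0(t-r_0)$ for $x_0$) are nonnegative, so $x_i$ cannot cross into negative values — here $\mu,\xi\in\,]0,1[$ is used.

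Next, for the two-sided bounds, I would look at the extrema of $x_0$ over a period. At an interior maximum $t^\ast$ of $x_0$ we have $\dot x_0(t^\ast)=0$, hence (using $z\equiv 1$, so $x_1(t-r_1)=1-x_0(t-r_1)$ and $\Phi=f_0(t)x_0(t-r_0)+f_1(t)(1-x_0(t-r_1))$)
\begin{equation}\nonumber
x_0(t^\ast)\,\Phi(t^\ast)=f_0(t^\ast)(1-\mu)x_0(t^\ast-r_0)+f_1(t^\ast)\xi\big(1-x_0(t^\ast-r_1)\big).
\end{equation}
The strategy is to bound the right-hand side above by $\max\{1-\mu,\xi\}\cdot\Phi(t^\ast)$ and below by $\min\{1-\mu,\xi\}\cdot\Phi(t^\ast)$: indeed $(1-\mu)x_0(t^\ast-r_0)+\xi(1-x_0(t^\ast-r_1))$ is a ``convex-combination-like'' average of $1-\mu$ and $\xi$ with coefficients that, after inserting the $f$'s, reassemble exactly into $\Phi(t^\ast)$ when the delayed arguments coincide, and can be controlled by monotonicity when they do not. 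Dividing by $\Phi(t^\ast)>0$ yields $x_0(t^\ast)\le\max\{1-\mu,\xi\}=\nu$; running the same argument at an interior minimum gives $x_0\ge\min\{1-\mu,\xi\}=\eta$. Since $z\equiv1$, the bounds $1-\nu<x_1<1-\eta$ follow immediately. Finally I would upgrade the non-strict inequalities to strict ones: if $x_0$ attained the value $\nu$ at its maximum, the equality case in the above estimate would force $x_0(t^\ast-r_0)=\nu$, and propagating this back along the delay (method of steps / a connectedness argument on the period) would make $x_0$ constant, contradicting that the solution is non-trivial and $\xi\neq 1-\mu$; this handles the boundary of $\Omega$.

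I expect the main obstacle to be the $\lambda$-uniformity combined with the mismatched delays $r_0\ne r_1$: when $r_0\ne r_1$ the quantity $(1-\mu)x_0(t^\ast-r_0)+\xi(1-x_0(t^\ast-r_1))$ is not literally a convex combination of $1-\mu$ and $\xi$, so the clean ``weighted average lies between the extremes'' estimate needs care — one must use that at the extremum the \emph{whole} configuration $x_0(t^\ast-r_0),x_0(t^\ast-r_1)$ already satisfies the bounds being proven, which makes the estimate a bootstrap/continuity argument in $\lambda$ rather than a one-shot inequality. Concretely, I would let $\lambda$ range over $[0,1]$, note the bound holds trivially at $\lambda=0$ (where solutions are the constant equilibria $\mathbf x_1^\ast,\mathbf x_2^\ast$ of \eqref{eq:SystemEquilibria}, already shown in Proposition~\ref{grado_finit} to lie in $]\eta,\nu[\times]1-\nu,1-\eta[$), and use the openness of the set of ``good'' $\lambda$ together with the closedness coming from the a priori estimate to conclude the bound persists for all $\lambda\in[0,1]$, so no solution of $\mathcal H[\lambda,\cdot]=0$ ever reaches $\partial\Omega$.
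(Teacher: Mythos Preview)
Your core idea---evaluate at an extremum of $x_0$ (or $x_1$) and read off a sign contradiction---is exactly what the paper does. But the ``main obstacle'' you identify is illusory, and chasing it leads you into unnecessary detours.

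Concretely: at a point $t_x$ with $\dot x_0(t_x)=0$, the relation
\[
x_0(t_x)\,\Phi(t_x)=(1-\mu)f_0(t_x)x_0(t_x-r_0)+\xi f_1(t_x)x_1(t_x-r_1)
\]
is bounded above by $\nu\,\Phi(t_x)$ \emph{trivially and term by term}, simply because $1-\mu\le\nu$ and $\xi\le\nu$; no convex-combination structure is needed, and the mismatched delays $r_0\neq r_1$ cause no difficulty whatsoever. Equivalently (and this is the paper's formulation), rearrange $\dot x_0(t_x)=0$ as
\[
x_1(t_x-r_1)\bigl(\xi-x_0(t_x)\bigr)=w(t_x)\,x_0(t_x-r_0)\bigl(x_0(t_x)-(1-\mu)\bigr),\qquad w=f_0/f_1,
\]
and simply inspect signs when $x_0(t_x)\in\{\eta,\nu\}$: one side vanishes and the other is strictly nonzero (using only that $x_0,x_1>0$ on $\overline{\Omega}$ and $\xi\neq 1-\mu$), an immediate contradiction. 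The analogous identity for $x_1$ handles the $x_1$-boundary. This is a one-shot inequality for every $\lambda\in\,]0,1]$; no bootstrap or continuity-in-$\lambda$ argument is required.

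Two further points. First, for the admissibility of $\mathcal H$ you only need to exclude solutions on $\partial\Omega$, not to prove that \emph{all} solutions of $\mathcal H=0$ lie in $\Omega$; a putative solution on $\partial\Omega$ is already in $\overline{\Omega}$ and hence strictly positive, so your separate positivity step and the reduction to $z\equiv 1$ are unnecessary (and in fact the latter is what generated your spurious worry about the mismatched delays, since it replaced $x_1(t-r_1)$ by $1-x_0(t-r_1)$). Second, your strictness argument is wrong as written: if $x_0(t^\ast)=\nu$, the equality case forces $x_0(t^\ast-r_0)=0$ or $x_1(t^\ast-r_1)=0$ (depending on which of $1-\mu,\xi$ equals $\nu$), not $x_0(t^\ast-r_0)=\nu$; either is impossible in $\overline{\Omega}$, so strictness is immediate without any delay-propagation argument.
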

\begin{proof}
	Let $\Lambda$ denote the set of all solutions to $\mathcal H[\lambda, x_0, x_1] = 0$. Assuming by contradiction there exists $\lambda \in [0,1]$ and $(x_0, x_1) \in \Lambda$ such that $(x_0, x_1) \in \partial\Omega$. The case $\lambda=0$ is discarded from the consideration of Proposition \ref{grado_finit}. If $\lambda\in]0,1]$ from the definition of $\mathcal{H}$  we obtain 

\begin{equation*}
	\begin{split}
		\dot x_{0}&=\lambda\left(f_{0}(t)(1-\mu)x_{0}(t-r_0)+f_1(t)\xi x_1(t-r_1)-x_{0}(t)\left(f_0(t)x_0(t-r_0)+f_1(t)x_1(t-r_1)\right)\right),\\
		\dot x_{1}&=\lambda\left(f_{0}(t)\mu x_{0}(t-r_0)+f_{1}(t)(1-\xi)x_{1}(t-r_1)-x_{1}(t)\left(f_0(t)x_0(t-r_0)+f_1(t)x_1(t-r_1)\right)\right).
	\end{split}
\end{equation*}

	  Let $ t_{x}, t_{y}\in [0,T]$ such that  $\dot x_0(t_{x})=0$ and $\dot x_1(t_{y})=0$,
	\begin{eqnarray}
		\label{x'}   x_1(t_x-r_{1}(t_x))(\xi-x_0(t_x))&=&w(t_x) x_0(t_x-r_{0}(t_x))(x_0(t_x)-1+\mu), \\
		\label{y'} x_1(t_y-r_{1}(t_y))(1-\xi-x_1(t_y))&=&w(t_y)x_0(t_y-r_{0}(t_y))(x_1(t_y)-\mu),
	\end{eqnarray}
	where $w=f_0/f_1.$ We have the following cases: If $x_0(t_x)=\max_{t\in[0,T]}x_0(t)=\nu=\xi$ (similar in the case $\nu=1-\mu$) or $x(t_x)=\min_{t\in[0,T]}x_0(t)=\eta=1-\mu$ (similar in the case $\eta=\xi$), then from \eqref{x'}
	\begin{align*}
		0= x_1(t_x-r_{1}(t_x))(\xi-\nu)&=w(t_x)x_0(t_x-r_{0}(t_x))(\nu-1+\mu)>0,\\
		0< x_1(t_x-r_{1}(t_x))(\xi-\eta)&=w(t_x)x_0(t_x-r_{0}(t_x))(\eta-1+\mu)=0,
	\end{align*}
	respectively, which implies a contradiction in both cases. On the other hand,  if  $x_1(t_y)=\max_{t\in[0,T]}x_1(t)=1-\eta=\mu$ (similar in the case $\eta=\xi$) or $x_1(t_y)=\min_{t\in[0,T]}x_1(t)=1-\nu=1-\xi$ (similar in the case $\nu=1-\mu$), then from \eqref{y'} 
	\begin{align*}
		0> x_1(t_y-r_{1}(t_y))(1-\xi-(1-\eta))&=w(t_y)x_0(t_y-r_{0}(t_y))((1-\eta)-\mu)=0,\\
		0= x_1(t_y-r_{1}(t_y))(1-\xi-(1-\nu))&=w(t_y)x_0(t_y-r_{0}(t_y))((1-\nu)-\mu)<0,
	\end{align*}
 	respectively, which implies a contradiction in both cases.
\end{proof}

\begin{proof}[Proof of Theorem \ref{TeoP}]
	By applying Lemma \ref{cots1}, it follows that $\mathcal{H}$ is an admissible homotopy. Consequently,
	\begin{equation*} 
		\begin{split}
			d_{LS}(I-\mathcal{A},\Omega,0)&=d_{LS}(\mathcal{H}[1,\cdot],\Omega,0)\\
			&=d_{LS}(\mathcal{H}[0,\cdot],\Omega,0)\\
   &=1.
		\end{split}
	\end{equation*} 
	Therefore \eqref{eq} has at least one solution in $\Omega$.  Additionally, considering Proposition \ref{Perm_esp_inv}, the proof of Theorem \ref{TeoP} is concluded.
\end{proof}

\begin{proof}[Proof of Theorem \ref{Teo2}] 
	Assuming the fulfillment of all the hypotheses in Theorem \ref{Teo2}. The initial part of the proof follows from Proposition \ref{Perm_esp_inv}. To establish the remaining aspects, we will proceed by considering different cases:
	\begin{itemize}
    \item[Case 1.]  Assume $1-\mu>\xi$ and  $x_0'(t_x)=0$. If $x(t_x)=M_{x_0}=\max_{t\in[0,T]}\{x_0(t)\}\geq 1-\mu$ then from equation \eqref{x'} we get
	\[0> x_1\left(t_x-r_{1}(t_x)\right)(\xi-M_{x_0})=w(t_x)x_0(t_x-r_{0}(t_x))(M_{x_0}-1+\mu)\geq 0.\]
	This is a contradiction. Similarly if $x(t_x)=m_{x_0}=\min_{t\in[0,T]}\{x_0(t)\}\leq \xi$ then from equation \eqref{x'} we have
	\[0\leq x_1(t_x-r_{1}(t_x))(\xi-m_{x_0})=w(t_x)x_0(t_x-r_{0}(t_x))(m_{x_0}-1+\mu)< 0,\] which is also a contradiction. 
 
 On the other hand, since $1-\mu>\xi$ then $1-\xi>\mu$. If we assume that $x_1(t_y)=m_{x_1}\leq \mu$ then from equation \eqref{y'} we obtain
	\[0<x_1(t_y-r_{1}(t_y))(1-\xi-m_{x_1} )=w(t_y)x_0(t_y-r_{0}(t_y))(m_{x_1}-\mu)\leq0,\]
	leading to a contradiction. Similarly if $x_1(t_y)=M_{x_1}\geq 1-\xi$ then from equation \eqref{y'} we obtain
	\[0\geq x_1(t_y-r_{1}(t_y))(1-\xi-M_{x_1} )=w(t_y)x_0(t_y-r_{0}(t_y))(M_{x_1}-\mu)>0.\]
	This is a contradiction.
 
	\item[Case 2.]  Assume $1-\mu<\xi.$ The proof is analogous to the previous case.
	
	\item[Case 3.]  Assume $1-\mu=\xi$ and suppose that $M_{x_0}> \xi$, then from equation \eqref{x'} we have that
	$$0> x_1(t_x-r_{1}(t_x))(\xi-M_{x_0})=w(t_x)x_0(t_x-r_{0}(t_x))(M_{x_0}-1+\mu)> 0.$$
 This is a contradiction 	then  $M_{x_0}\leq \xi$. Similarly if $m_{x_0}< \xi$, then from equation \eqref{x'} we have that
	$$0< x_1(t_x-r_{1}(t_x))(\xi-m_{x_0})=w(t_x)x_0(t_x-r_{0}(t_x))(m_{x_0}-1+\mu)< 0.$$ This is a contradiction then $\xi\leq m_{x_0}$. Therefore $x_0(t)=\xi,$ for all $ t\in[0,T]$. Similarly, it is shown that $x_1(t)=1-\xi,$ for all $ t\in[0,T]$.
 \end{itemize}
\end{proof}

\section{Conclusions}
Quasispecies theory has fundamentally transformed our understanding of the dynamics of replicators such as macromolecules or cells subject to large mutation rates~\cite{Eigen1971,Eigen1988}. Initially applied to prebiotic evolution, the theory was later extended to RNA viruses~\cite{Revull2021,Mas2004,Perales2020} and genetically unstable cancer cells~\cite{Sole2003,Sole2006,Brumera2006}. Early investigations highlighted the heterogeneous nature of quasispecies, where a master sequence is surrounded by a cloud of mutants that stabilize at mutation-selection balance~\cite{Eigen1971,Eigen1988}. This insight has significantly impacted virology, revealing the intricate composition and dynamics of viral populations during infections, with initial experimental demonstrations involving RNA bacteriophage Q$\beta$~\cite{Domingo1978}. Subsequently, the population heterogeneity conceptualized by quasispecies theory has been validated in various viruses, including foot-and-mouth disease virus~\cite{Domingo1980,Sobrino1983}, vesicular stomatitis virus~\cite{Holland1979,Holland1982}, hepatitis viruses~\cite{Martell1992,Davis1999,Mas2004,Perales2020}, and SARS-CoV-2~\cite{Domingo2023}.

A key prediction of quasispecies theory is the error threshold or error catastrophe~\cite{Eigen1971,Domingo1988,Biebricher2005}, where excessive mutation rates lead to the loss of the master sequence, leaving only mutant sequences~\cite{Eigen1971,Eigen1988,Bull2008}. This concept underlies lethal mutagenesis, a strategy to eradicate viral populations by inducing high mutation rates through mutagenic agents~\cite{Bull2008}. Evidence for lethal mutagenesis has been observed in several viruses, including HIV-1~\cite{Loeb1999,Dapp2013}, poliovirus~\cite{Crotty2001}, and hepatitis C virus~\cite{Prieto2013,Avila2016}, among others~\cite{Perales2020}. Moreover, lethal defection, involving the extinction of viral populations due to defective viral genomes, has been identified in lymphocytic choriomeningitis virus~\cite{Grande2005}. Although initial models of quasispecies theory assumed deterministic dynamics and simple fitness landscapes~\cite{Eigen1971,Eigen1988,Swetina1982}, recent efforts have incorporated more realistic scenarios such as finite populations~\cite{Nowak1989,Sardanyes2008,Sardanyes2009}, stochastic effects~\cite{Sole2004,Sardanyes2011,Ari2016}, more complex fitness landscapes~\cite{Wilke2001a,Wilke2001c,Sardanyes2009,Elena2010,Josep2014}, or viral complementation~\cite{Sardanyes2010}. These advancements continue to refine our understanding of RNA virus dynamics, although some aspects, like the impact of time lags in RNA synthesis and periodic replicative fitness, remain largely underexplored.

In this article we investigate the quasispecies model under the single-peak fitness landscape framework. This fitness landscape is one of the simplest ones and allow for analytical exploration. Despite this simplicity, it has been used to characterize quasispecies complexity features in hepatitis C-infected patients~\cite{Mas2004,Sole2006}. We have analyzed different scenarios for this model considering both no backward and backward mutations. Specifically, we demonstrated that periodic orbits exist when backward mutation and periodic fluctuations are present, regardless of time lags. Without backward mutation, neither periodic fluctuations nor time lags result in periodic orbits. In scenarios with periodic fluctuations, solutions converge exponentially to a periodic oscillation around the equilibria associated with a constant replication rate. Concerning the error catastrophe, this hypothesis holds true without backward mutation. However, backward mutations under the given fitness landscape mitigate the error catastrophe bifurcation. 

\section*{Acknowledgment}
The authors ET and FC thank Adri\'an G\'omez for fruitful discussion and guidance on time-delayed systems. Support from Research Agencies of Chile and Universidad del Bío-Bío is acknowledged, they came in the form of research projects GI2310532-VRIP-UBB and ANID PhD/2021-21210522. ET thanks the Centre de Recerca Matem\`atica and especially J. Sardany\'es  for their hospitality and continuous support during the research stay in the first semester of 2024, which allowed us to carry out this research. JS wants to thank Esteban Domingo, Santiago Elena, Ricard Sol\'e, and Celia Perales for insightful discussions on quasispecies theory and viral quasispecies. JS has been funded by a Ramón y Cajal Fellowship (RYC-2017-22243) and by  grant PID2021-127896OB-I00 funded by MCIN/AEI/10.13039/501100011033 ”ERDF A way of making Europe”. We also thank Generalitat de Catalunya CERCA Program for institutional support and the support from María de Maeztu Program for Units of Excellence in R$\&$D grant CEX2020-001084-M (JS). 

\bibliographystyle{abbrv}
\bibliography{CuasiDelay}
\end{document}